\newcommand{\intbar}{{- \hspace{-1 em}} \int}
\newtheorem{thm}{Theorem}
\begin{document}

\title{Non-Equilibrium Steady States in Kac's Model Coupled to a Thermostat
}


\author{Josephine Evans \thanks{The authors were supported by the UK Engineering and Physical
Sciences Research Council (EPSRC) grant EP/H023348/1 for the
University of Cambridge Centre for Doctoral Training, the Cambridge
Centre for Analysis.}
}


\institute{J. Evans \at
              fDepartment of Pure Mathematics and Mathematical Statistics\\
University of Cambridge\\
Wilberforce Road\\
Cambridge CB3 0WA, UK\\
              \email{jahe2@cam.ac.uk}         
}

\date{Received: date / Accepted: date}

\maketitle

\begin{abstract}
This paper studies the existence, uniqueness and convergence to non-equilibrium steady states in Kac's model with an external coupling. We work in both Fourier distances and Wasserstein distances. Our methods work in the case where the external coupling is not a Maxwellian equilibrium. This provides an example of a non-equilibrium steady state. We also study the behaviour as the number of particles goes to infinity and show quantitative estimates on the convergence rate of the first marginal.
\keywords{Kac's Model \and Non-Equilibrium Steady State \and Convergence to Equilbrium \and Gabetta-Toscani-Wennberg distance}
\end{abstract}

\section{Introduction}
Kac's model was introduced by Mark Kac in 1956 \cite{KAC56}. It is a stochastic N-particle model designed to mimic the dynamics of velocities of particles in a spatially homogeneous dilute gas. The dynamics are those of N particles with one dimensional velocities, these particles interact in a Markov process, where two particles ``collide'' resulting in a mixing of their velocities. The state of the system can be described by the vector of velocities of each of the particles. Kac derived an equation on the law of this system, this equation is usually called the Kac master equation and it is a linear integro-differential equation. Kac showed that, in a certain sense, as the number of particles goes to infinity the master equation tends to a Boltzmann like equation. This motivates estimates on the behaviour of the marginals of solutions which are uniform in the number of particles, which could then be used to show, or at least indicate, the same behaviour for the Boltzmann equation. In general a direct study of the Boltzmann equation has proved more fruitful, however the master equation has become an object of study in its own right. Convergence to equilibrium and spectral gaps have been studied in Kac's master equation in both entropy \cite{CGLV10,EINAV11} and $L^2$ \cite{J01,CCL00}. This paper studies convergence to equilibrium for solutions of the master equation coupled to a thermostat. More precisely, we study the master equation for a system of $N$ particles who, as well as ``colliding'' with each other, can also ``collide'' with some infinite collection of other particles whose velocities lie in some fixed distribution. When this fixed distribution is not a Maxwellian this allows for the possibility of a non-equilibrium steady state. One possible more physical interpretation of this would be if the system was interacting with two different heat baths at different temperatures. Situations related to the existence and convergence to non-equilibrium steady states are studied in \cite{BLF2000,DHAR08,GLP78,RPE99,RT00} and in particular looking at exponential convergence in \cite{RT02,EH00}.  

This paper is fundamentally motivated by two others the first \cite{BLV14} studies a similar model but only in the situation where the thermal bath is a Maxwellian distribution. They show exponential convergence to equilibrium in both entropy and $L^2$. The second \cite{CLM15} studies the existence of non-equilibrium steady states in various coupled equations arising from mathematical physics including the non-linear spatially homogeneous Boltzmann equation. The paper \cite{BLV14} suggest as a further question, what would happen in the case of a non-Maxwellian reservoir and we adapt the techniques of \cite{CLM15} to study this situation. We also include a study of how our estimates on the first marginal behave as the number of particles $N \rightarrow \infty$. This allows us, in some sense, to commute the long time and $N \rightarrow \infty$ limit. The $N \rightarrow \infty$ limit is very similar to the equations studied in \cite{CLM15}, they study a coupled Boltzmann equation where in our case the limit would be a coupled Boltzmann-Kac equation. The convergence, both in this paper and in the Maxwellian case studied in \cite{BLV14}, is primarily driven by the external force and not by the Kac mixing part. However, the effect of the Kac part is more evident in this paper since it affects the form of the steady state. The work in \cite{BLV14} has been extended in \cite{TV15, BLTV16} to study how their thermostatted model relates to a partially thermostatted model and to the original Kac's model. In this second paper they make use of the GTW distance used in our work.

  Following the strategy of \cite{CLM15} we study the problem of convergence to equilibrium in the Gabetta-Toscani-Wennberg metric . This metric is introduced in \cite{GTW85} and is
\[ d_{GTW, N}(f,h) = \sup_{\xi \in \mathbb{R}^N, \xi \neq 0}\frac{|\hat{f}(\xi)-\hat{h}(\xi)|}{|\xi|^2}, \] where $\hat{f}$ represents the Fourier transform of $f$. This is a metric on the space of probability measures with finite second moment and the same finite first moment. We also study convergence in the metric 
\[ d_{T1,N}(f,h)=\sup_{\xi \in \mathbb{R}^N, \xi \neq 0} \frac{|\hat{f}(\xi)-\hat{h}(\xi)|}{|\xi|}, \] This is a metric on the space of probability distributions with finite mean.

 If we choose $g$ to be the distribution of the particles in the thermostat and we pick $g \in L^2$ such that $g$ is a probability distribution function with zero mean and finite second moment $K_g$ then the master equation for the system we study is
\begin{align} \partial_t F_n = -\lambda N(I-Q)[F_N] - \mu \sum_{j=1}^N (I-R_j)[F_N] = \mathcal{L}[F_N], \label{mastereq} \end{align} where
\[ Q[F_N]= \frac{1}{{N \choose 2}} \sum_{i<j} \intbar_0^{2\pi}F_N(v_{i,j}(\theta)) \mathrm{d}\theta, \] and \[ R_j[F_N]=\int \mathrm{d}w \intbar_0^{2\pi} \mathrm{d}\theta g(w_j^*) F_N(v_j(w, \theta)). \] In these
\begin{align*}
v_{ij}(\theta)&=(v_1,\dots,v_i \cos (\theta)+v_j \sin (\theta),\dots,-v_i \sin (\theta)+v_j \cos (\theta),\dots,v_N),\\
v_j(w,\theta)&=(v_1,\dots,v_j\cos (\theta) + w \sin (\theta),\dots,v_N),\\
w_j^*&=w\cos (\theta)-v_j \sin (\theta).
\end{align*}

We show that
\begin{thm}
A steady state for the master equation exists, is unique and has the same moments up to order 2 as $g^{\otimes N}$. \label{fixedpoint}
\end{thm}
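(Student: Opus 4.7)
The steady-state equation $\mathcal{L}[F_\infty]=0$ is equivalent to the fixed point equation $F_\infty = T[F_\infty]$, where
\[
T = \frac{\lambda}{\lambda+\mu}\,Q + \frac{\mu}{(\lambda+\mu)N}\sum_{j=1}^N R_j
\]
is a convex combination of Markov operators on probability measures on $\mathbb{R}^N$. Following the approach of \cite{CLM15}, my plan is to establish existence and uniqueness via a Banach-type contraction argument for $T$ in the GTW metric, and then to read off the moments directly from the fixed-point equation.

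Passing to Fourier variables, a direct change of variables (using that the Kac collision rotates $(v_i,v_j)$, and that in the thermostat term the map $(v_j,w)\mapsto(v_j\cos\theta+w\sin\theta,\,w\cos\theta-v_j\sin\theta)$ is also a rotation) yields
\[
\widehat{Q[F]}(\xi) = \frac{1}{\binom{N}{2}}\sum_{i<j}\intbar_0^{2\pi}\hat{F}(\xi_{ij}(\theta))\,d\theta,
\]
\[
\widehat{R_j[F]}(\xi)=\intbar_0^{2\pi}\hat{g}(-\xi_j\sin\theta)\,\hat{F}(\xi^{(j,\theta)})\,d\theta,
\]
where $\xi_{ij}(\theta)$ is the rotated Fourier vector (satisfying $|\xi_{ij}(\theta)|=|\xi|$) and $\xi^{(j,\theta)}$ is $\xi$ with $\xi_j$ replaced by $\xi_j\cos\theta$ (so $|\xi^{(j,\theta)}|^2=|\xi|^2-\xi_j^2\sin^2\theta$). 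Combining these with $|\hat g|\le1$, the defining bound $|\hat F(\eta)-\hat{\tilde F}(\eta)|\le d_{GTW,N}(F,\tilde F)|\eta|^2$, the identity $\intbar_0^{2\pi}\sin^2\theta\,d\theta=1/2$, and $\sum_j\xi_j^2=|\xi|^2$, I arrive at the strict contraction
\[
d_{GTW,N}(T[F],T[\tilde F])\le\Bigl(1-\tfrac{\mu}{2N(\lambda+\mu)}\Bigr)\,d_{GTW,N}(F,\tilde F).
\]
Picard iteration from $F_0 = g^{\otimes N}$ then produces a Cauchy sequence whose Fourier transforms converge uniformly on compacts to a continuous $\varphi$ with $\varphi(0)=1$; Lévy's continuity theorem promotes $\varphi$ to $\hat{F}_\infty$ for a probability measure $F_\infty$, which is the unique fixed point in the class of probability measures with zero mean and finite second moment.

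For the moment identification, I would exploit the permutation symmetry of $T$ (forcing the unique $F_\infty$ to be exchangeable) and then test the equation $F_\infty = T[F_\infty]$ against linear and quadratic functions. The mean equation gives $m_i(F_\infty) = [\lambda(N-2)+\mu(N-1)]/[(\lambda+\mu)N]\cdot m_i(F_\infty)$; the coefficient is strictly less than $1$, so $m_i(F_\infty)=0$. Summing the diagonal second-moment equations yields $E(F_\infty) = NK_g$, and exchangeability then forces $E_{ii}(F_\infty)=K_g$ for each $i$. The same computation applied to the off-diagonal sum produces a homogeneous linear equation with strictly non-zero coefficient, whose only solution is $E_{ij}(F_\infty)=0$ for $i\neq j$. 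These are precisely the first two moments of $g^{\otimes N}$.

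The main obstacle is neither the contraction nor the moment identities individually but the algebraic bookkeeping in the second step: each of $Q$ and $R_j$ acts differently on diagonal and off-diagonal quadratic combinations, and one must keep track of the coefficients produced by the $\binom{N}{2}^{-1}\sum_{i<j}$ average and the $\intbar d\theta$ averages to verify that the resulting linear systems are non-degenerate. The completeness and tightness needed for the Lévy step are handled by the explicit energy identity $E(T[F])-NK_g=(1-\mu/(2N(\lambda+\mu)))(E(F)-NK_g)$, which pins the second moment of every Picard iterate to $NK_g$ and gives the necessary uniform $L^2$ bound on the sequence.
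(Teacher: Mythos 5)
Your proposal is correct and follows essentially the same route as the paper: the same fixed-point operator $\Phi=\gamma Q+(1-\gamma)N^{-1}\sum_j R_j$, the same Fourier representations of $Q$ and $R_j$, the identical contraction constant $1-\tfrac{\mu}{2N(\lambda+\mu)}$ obtained from $|\hat g|\le 1$ and $\intbar\sin^2\theta\,\mathrm{d}\theta=1/2$, and the same moment identities (the paper derives them as decay rates for the time-dependent moments rather than from the stationary equation, which is an equivalent computation). Your explicit Picard--L\'evy argument with the propagated energy identity is in fact a more careful treatment of the completeness issue that the paper elides by simply invoking the contraction mapping theorem.
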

\begin{thm} \label{convergence}
If we start with initial data $F^0_N$ and $H^0_N$ which are probability distributions on $\mathbb{R}^N$ with finite first and second moments then we have the following possible situations:

1. If $F^0$ and $H^0$ have the same mean initially then the $GTW$ distance between the solutions is finite for all time and we get the exponential convergence:
\[ d_{GTW,N} (F_N(t), H_N(t)) \leq e^{-\mu t/2} d_{GTW, N}(F_N^0, H_N^0). \]

2. If $F^0$ and $H^0$ have different means then we can construct an altered distance in which the solutions still converge exponentially fast towards each other with rate $\mu/2$. We also have the estimate
\[ d_{T1, N}(F_N(t), H_N(t)) \leq e^{-\mu t/4} d_{T1,N}(F_N^0, H_N^0). \]
\end{thm}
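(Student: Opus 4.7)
The plan is to work entirely in Fourier space. Taking the Fourier transform of \eqref{mastereq} yields
\begin{equation*}
\partial_t \hat F_N(\xi) = -N(\lambda+\mu)\hat F_N(\xi) + \lambda N \widehat{Q[F_N]}(\xi) + \mu \sum_{j=1}^N \widehat{R_j[F_N]}(\xi).
\end{equation*}
A standard change of variables shows that $\widehat{Q[F_N]}(\xi)$ is an average of $\hat F_N$ over rotated arguments with $|\xi|$ preserved, and, after rotating the pair $(w,v_j)$,
\begin{equation*}
\widehat{R_j[F_N]}(\xi) = \intbar_0^{2\pi} \hat g(-\xi_j \sin\theta)\, \hat F_N(\xi_j^*(\theta))\, d\theta,
\end{equation*}
where $\xi_j^*(\theta)$ is $\xi$ with its $j$th component replaced by $\xi_j\cos\theta$. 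Crucially $|\xi_j^*(\theta)|^2 = |\xi|^2 - \xi_j^2\sin^2\theta$, so $\intbar |\xi_j^*(\theta)|^2\, d\theta = |\xi|^2 - \xi_j^2/2$. Since \eqref{mastereq} is linear, $\phi := \hat F_N - \hat H_N$ satisfies the same ODE.

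For part 1, set $V(t) := \sup_\xi |\phi(t,\xi)|/|\xi|^2$, which is finite since $F^0$ and $H^0$ share their mean and have finite second moment. Using $|\hat g|\le 1$ (since $g$ is a probability), the bound $|\widehat{Q[\phi]}(\xi)| \le |\xi|^2 V$ from rotation invariance, and $|\widehat{R_j[\phi]}(\xi)|\le (|\xi|^2 - \xi_j^2/2) V$ from the averaging identity above, Duhamel's formula gives
\begin{equation*}
V(t) \le e^{-N(\lambda+\mu)t}V(0) + \int_0^t e^{-N(\lambda+\mu)(t-s)}\bigl(N\lambda + \mu(N-\tfrac12)\bigr)V(s)\,ds,
\end{equation*}
and Gronwall produces the rate $\mu/2$. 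Notice that the $\lambda$ terms cancel: the Kac mixing is neutral and all the decay is driven by the thermostat. The $T_1$ estimate of part 2 follows from exactly the same scheme with $|\xi|$ in the denominator; the only new ingredient is $\intbar \sqrt{1 - \alpha^2\sin^2\theta}\, d\theta \le 1 - \alpha^2/4$ (from $\sqrt{1-x}\le 1-x/2$), which substitutes $N - 1/4$ for $N - 1/2$ and yields the rate $\mu/4$.

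The main obstacle is constructing the altered GTW-type distance in part 2 when the means differ, since then $|\phi(\xi)|/|\xi|^2$ is genuinely singular at the origin. My strategy would be to exploit that the mean vector of $F_N - H_N$ decays at rate $2\lambda + \mu > \mu/2$; indeed, testing \eqref{mastereq} against $v_i$ yields a contribution of $-2\lambda$ from the Kac part and $-\mu$ from the thermostat (using that $g$ has zero mean). I would then peel off the linear-in-$\xi$ part of $\phi$ near the origin, writing $\phi(\xi) = i(m_F - m_H)\cdot\xi\,\chi(\xi) + \phi_{\mathrm{reg}}(\xi)$ with a smooth cutoff $\chi$, and define an altered distance of the form $\widetilde d = \sup_\xi |\phi_{\mathrm{reg}}|/|\xi|^2 + c|m_F - m_H|$ for a suitable constant $c > 0$. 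The delicate point is that shifting one distribution to match the mean of the other does not preserve the form of \eqref{mastereq}, as the $R_j$ operators are not translation invariant; I would therefore track a coupled system of differential inequalities for the regular part and the mean difference, choosing $c$ and $\chi$ so that the cross-terms generated by the translation combine to give contraction of $\widetilde d$ at rate $\mu/2$.
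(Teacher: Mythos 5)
Your proposal is correct and follows essentially the same route as the paper: the part 1 and $T_1$ estimates are the paper's contraction bound for $\Phi$ (using $|\hat g|\le 1$, rotation invariance of $|\xi|$ under $Q$, and the averaged identity for $|\xi_j(\theta)|$) recast as a Duhamel--Gronwall argument, yielding the same rates $\mu/2$ and $\mu/4$. Your sketched construction for part 2 -- peeling off the linear-in-$\xi$ part with a cutoff $\chi$ and exploiting that the mean difference decays at rate $2\lambda+\mu>\mu/2$ -- is exactly the paper's correction term $\mathcal{M}[F_N-H_N]$ and source $W$, so the translation-invariance worry you raise does not arise.
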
 
\begin{remark}The altered distance involves adding a correction term and is defined in order to deal with the fact that the $GTW$ distance cannot deal with initial data with non-zero mean. If the two solutions initially have the same mean this reduces to the $GTW$ distance. We give the theorem in both distances which shows we can either sacrifice something in the dependence on initial data or in the rate. In the asymptotic study as $N \rightarrow \infty$ the two distances give the same dependence on $N$ through different mechanisms which suggests that the dependence on $N$ occurring here is in some way intrinsic to the problem.
\end{remark}
\begin{remark}
Here $\mu/2$ is the rate found in \cite{BLV14} to be the $L^2$ spectral gap and the rate of convergence to equilibrium in relative entropy.
\end{remark}
Furthermore we wish to study how the $N$ particle Kac's model behaves as $N \rightarrow \infty$ in the manner originally proposed by Kac to link it with the spatially homogeneous Boltzmann equation. In order to do this we study how the convergence results which we have obtained can be translated into convergence results on the first marginal. We prove properties of the GTW metric which are similar to subadditivity. If the initial data $(F_N(0))_{N \geq 2}$ forms a chaotic family then we can control the convergence rate of the first marginals to equilibrium uniformly in $N$. We formally define the notion of chaotic family later. Similarly to \cite{BLV14} we can prove propagation of chaos in exactly the same manner as Kac in \cite{KAC56}. This means that the first marginals of the solution to the master equation will limit to the solution of a Boltzmann like equation. This motivates our proof of uniform in $N$ convergence rates for the first marginal.
\begin{thm} Suppose that $f$ and $h$ are mean zero probability densities on $\mathbb{R}$. If $(F_N(0,v))_{N \geq 2}$ and $(H_N(0,v))_{N \geq 2}$ are respectively $f,h$-chaotic families with respect to the Gabetta-Toscani-Wennberg metric. If furthermore, the distance between $F_N(0,\cdot)$ and $f^{\otimes N}$, and between $H_N(0,\cdot)$ and $h^{\otimes N}$ are bounded uniformly in $N$, and $F_N, H_N$ are the solution to the $N$-particle coupled Kac's master equation with this initial data then there exists a constant $C$ independent of $N$ such that
\[ d_{GTW,1} ( \Pi_1(F_N),\Pi_1(H_N)) \leq (C+d_{GTW,1}(f,h)) e^{-\frac{\mu}{2}t}. \] \label{chaos}
\end{thm}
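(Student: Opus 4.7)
The plan is to reduce the bound on the first marginal to the $N$-particle contraction of Theorem \ref{convergence}, and then compare the two initial data through the tensorised reference distributions $f^{\otimes N}$ and $h^{\otimes N}$. The whole argument rests on two almost free ``subadditivity'' properties of the GTW distance, plus the chaos hypothesis.

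\textbf{Step 1: marginal inequality.} For any probability measure $F_N$ on $\mathbb{R}^N$, the Fourier transform of the first marginal is $\widehat{\Pi_1 F_N}(\xi_1) = \hat F_N(\xi_1, 0, \dots, 0)$. Taking the supremum in the definition of $d_{GTW,N}$ only over vectors of the form $(\xi_1, 0, \dots, 0)$ gives, for free,
\[ d_{GTW,1}(\Pi_1 F_N, \Pi_1 H_N) \le d_{GTW,N}(F_N, H_N). \]

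\textbf{Step 2: tensorisation estimate.} Since $|\hat f|, |\hat h| \le 1$, a telescoping identity gives
\[ \hat f^{\otimes N}(\xi) - \hat h^{\otimes N}(\xi) = \sum_{k=1}^N \Big(\prod_{i<k} \hat f(\xi_i)\Big)\bigl(\hat f(\xi_k) - \hat h(\xi_k)\bigr)\Big(\prod_{i>k} \hat h(\xi_i)\Big), \]
so
\[ \bigl|\hat f^{\otimes N}(\xi) - \hat h^{\otimes N}(\xi)\bigr| \le \sum_{k=1}^N \bigl|\hat f(\xi_k) - \hat h(\xi_k)\bigr| \le d_{GTW,1}(f,h) \sum_{k=1}^N \xi_k^2 = |\xi|^2\, d_{GTW,1}(f,h). \]
Thus $d_{GTW,N}(f^{\otimes N}, h^{\otimes N}) \le d_{GTW,1}(f,h)$, independently of $N$.

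\textbf{Step 3: triangle inequality at $t=0$, then contract.} The chaos hypothesis provides constants $C_1, C_2$ with $d_{GTW,N}(F_N(0), f^{\otimes N}) \le C_1$ and $d_{GTW,N}(H_N(0), h^{\otimes N}) \le C_2$ uniformly in $N$. Combined with Step 2,
\[ d_{GTW,N}(F_N(0), H_N(0)) \le C_1 + C_2 + d_{GTW,1}(f,h). \]
The finiteness of the left-hand side forces $F_N(0)$ and $H_N(0)$ to share the same (zero) mean, inherited from $f^{\otimes N}$ and $h^{\otimes N}$, so Theorem \ref{convergence}(1) applies and yields
\[ d_{GTW,N}(F_N(t), H_N(t)) \le e^{-\mu t/2}\, d_{GTW,N}(F_N(0), H_N(0)). \]
Chaining with Step 1 gives the claim with $C := C_1 + C_2$.

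The only genuinely non-trivial input is the uniform-in-$N$ bound on $d_{GTW,N}(F_N(0), f^{\otimes N})$, which is exactly what ``chaoticity with respect to the GTW metric'' is engineered to provide. The possible obstacle would have been that the tensorisation estimate picks up a factor growing with $N$, but the quadratic weight in the denominator of $d_{GTW}$ precisely absorbs the $N$ terms of the telescoping sum, so everything is automatically uniform in $N$.
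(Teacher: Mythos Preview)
Your proof is correct and follows essentially the same route as the paper: the marginal inequality (your Step 1) is the paper's Lemma \ref{controlmarginals}, the tensorisation estimate (your Step 2) is Lemma \ref{tensorisedfunctions}, and your Step 3 reproduces the paper's chaining argument via the triangle inequality and Theorem \ref{convergence}. The only cosmetic difference is that the paper packages Steps 1 and 2 as separate lemmas before assembling them.
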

Here we say that a family is $f$-chaotic with respect to a family of metrics, $(d_k)$, if
\[ d_{k}(\Pi_k[F_N],f^{\otimes N}) \rightarrow 0, \] as $N \rightarrow 0$ for every $k$. Here $d_k$ is a metric on $\mathbb{R}^k$ and $\Pi_k$ is a projection onto this subspace of $\mathbb{R}^N$. This is the standard notion of chaoticity which was introduced by Kac. Here we write it in terms of a distance which metrizes weak convergence of measures as it is more convenient for our set up.
\begin{remark}
Our theorem is really designed to work in the case of tensorised initial data and can be extended slightly as we have shown. If we no longer wanted our estimates to depend on the first marginal of the initial data we could replace it with the weaker, but difficult to check, condition
\[  d_N(F_N, H_N) \leq C \hspace{10pt} \forall N.\]
\end{remark}

We also have two theorems in the case where we have non-zero and non equal mean for $f$ and $h$ using each of the different metrics which we use to study this case.

\begin{thm} \label{convwithcorrectionterm}
Let $F_N^0$ and $H_N^0$ are respectively $f$ and $h$ chaotic families where the $GTW$ distance between $F_N^0$ and $f^{\otimes N}$ (resp. for $H_N^0$ and $h^{\otimes N}$) is bounded uniformly in $N$. Furthermore if $f$ and $h$ are probability densities with finite first and second moments and differentiable Fourier transforms, then we can choose a family of functions $\chi$ (one for each $N$) to construct an altered distance $\tilde{d}$ so that
\[ \tilde{d} \left( \Pi_1[F_N], \Pi_1[H_N]) \right) \leq (C_1 + (C_2 + C_3) \sqrt{N} + \tilde{d}(f,h))e^{-\frac{\mu}{2}t}. \] 
\end{thm}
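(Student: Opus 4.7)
The strategy is to apply the contraction of Theorem \ref{convergence} at the $N$-particle level and then project down to the first marginal. First I would choose the family of correction functions $\chi_N$ compatibly with projection: for test vectors of the form $\xi = (\xi_1, 0, \ldots, 0) \in \mathbb{R}^N$ the $N$-particle Fourier transform of $F_N$ agrees with the one-particle Fourier transform of $\Pi_1[F_N]$, so setting $\chi_N(\xi_1, 0, \ldots, 0) = \chi_1(\xi_1)$ yields the projection inequality
\[ \tilde d\bigl(\Pi_1[F_N(t)], \Pi_1[H_N(t)]\bigr) \leq \tilde d_N\bigl(F_N(t), H_N(t)\bigr). \]

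Next I would invoke part 2 of Theorem \ref{convergence} in the form $\tilde d_N(F_N(t), H_N(t)) \leq e^{-\mu t/2} \tilde d_N(F_N^0, H_N^0)$, and estimate the initial datum by triangle inequality through the tensorised states:
\[ \tilde d_N(F_N^0, H_N^0) \leq \tilde d_N(F_N^0, f^{\otimes N}) + \tilde d_N(f^{\otimes N}, h^{\otimes N}) + \tilde d_N(h^{\otimes N}, H_N^0). \]
The outer two terms would be handled using the uniform $d_{GTW,N}$ bound from the chaoticity hypothesis together with a bound on the discrepancy between the mean vector of $F_N^0$ and the tensor mean vector $(m_f,\ldots,m_f)$: the $d_{GTW}$-part gives a constant $C_1$ uniform in $N$, while the mean-vector correction contributes at worst $C_2\sqrt N$ (and similarly $C_3\sqrt N$ for $H$), since Euclidean norms of $N$-vectors with bounded entries grow at most as $\sqrt N$.

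For the middle term I would Taylor expand the difference $\prod_i \hat f(\xi_i) - \prod_i \hat h(\xi_i)$ around $\xi = 0$: the linear part is $i(m_f - m_h)\sum_i \xi_i$, which is cancelled exactly by the correction term with mean vector $(m_f - m_h, \ldots, m_f - m_h)$ against $\chi_N$; the quadratic remainder, after division by $|\xi|^2$, is controlled in terms of $\tilde d_1(f,h)$ using the differentiability of $\hat f$ and $\hat h$. The cross-term $(\sum_i \xi_i)^2 \leq N|\xi|^2$ is precisely what necessitates a $\chi_N$ tuned to $N$, and with this choice only $\tilde d(f,h)$ survives in the final bound. Combining the three pieces and multiplying by $e^{-\mu t/2}$ yields the claimed inequality.

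The main obstacle will be the construction of $\chi_N$ itself: it must simultaneously restrict consistently to single-particle Fourier modes (for the projection step), annihilate the linear singularity for each of the three comparisons above with the correct mean vectors, and keep $\tilde d_N(f^{\otimes N}, h^{\otimes N})$ comparable to $\tilde d_1(f,h)$ uniformly in $N$. The $\sqrt N$ appearing in the statement reflects the unavoidable cost of letting the $N$-particle mean vectors of chaotic data deviate from their common one-particle mean, and is the same scaling one would obtain by instead running the argument in $d_{T1,N}$ via the tensor-product inequality $|\prod_i \hat f(\xi_i) - \prod_i \hat h(\xi_i)| \leq \sum_i |\hat f(\xi_i) - \hat h(\xi_i)| \leq \sqrt N\, d_{T1,1}(f,h)\,|\xi|$, which explains the remark that both metrics produce the same $N$-dependence.
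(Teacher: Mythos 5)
Your overall architecture --- project to the first marginal, apply the $N$-particle exponential convergence, and split the initial datum by the triangle inequality through $f^{\otimes N}$ and $h^{\otimes N}$ --- matches the paper's, but the bookkeeping of where the $\sqrt N$ factors arise is wrong, and the one genuinely hard step is the one you wave away. The middle term is \emph{not} controlled by $\tilde d_1(f,h)$ alone: the lemma the paper proves for exactly this purpose is
\[ \tilde d_N(f^{\otimes N}, h^{\otimes N}) \leq \tilde d_1(f,h) + M\,|m_f-m_h|\,\sqrt N, \qquad M=\max\left\{ \tfrac{n_f}{|m_f|}, \tfrac{n_h}{|m_h|}\right\}. \]
After the bridging decomposition, the single global correction $(m_f-m_h)\chi_N(\xi)\sum_k i\xi_k$ does not cancel the sum of one-particle corrections exactly; the residual
\[ (m_f-m_h)\, i \sum_k \xi_k \left(\hat f(\xi_1)\cdots\hat f(\xi_{k-1})\hat h(\xi_{k+1})\cdots\hat h(\xi_N) - 1\right)\Big/ |\xi|^2 \]
is of order $\sqrt N$, and bounding it is where the work lies: the paper controls the ratio $A$ of the bracket to $m_f\sum_{j<k}i\xi_j+m_h\sum_{k<j}i\xi_j$ by $M$ via a stationary-point argument, which is precisely where the hypothesis of differentiable Fourier transforms is used. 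This residual is the source of $C_3\sqrt N$; asserting that ``only $\tilde d(f,h)$ survives'' for the middle term is a genuine gap.

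The other two attributions are also off. The outer terms cannot produce $\sqrt N$: the hypothesis that $d_{GTW,N}(F_N^0, f^{\otimes N})$ is finite and uniformly bounded forces $F_N^0$ and $f^{\otimes N}$ to have the same mean vector (otherwise the GTW distance is infinite), so there is no mean-vector discrepancy to pay for there --- these terms contribute only the uniform constant $C_1$. The remaining $C_2\sqrt N$ comes from the dynamics, not the initial data: the evolution of the altered distance carries the source term $W$, and the proof of Theorem \ref{convergence} yields only
\[ \sup_{\xi\neq 0}\frac{|\widehat{D_N}(t)|}{|\xi|^2} \leq \left(C\sqrt N + \sup_{\xi\neq0}\frac{|\widehat{D_N}(0)|}{|\xi|^2}\right)e^{-\mu t/2}, \]
not the clean contraction $\tilde d_N(F_N(t),H_N(t))\le e^{-\mu t/2}\,\tilde d_N(F_N^0,H_N^0)$ you invoke; the $C\sqrt N$ reflects the gradient of $\widehat W(\xi)/|\xi|^2$ at the origin and depends on $g$ and $\chi$. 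Your projection step for $\tilde d$ and the closing remark relating the $\sqrt N$ scaling to the $T1$ computation are consistent with the paper.
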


\begin{thm} \label{t1distfirstmarginal}
Suppose that $f$ and $h$ are probability densities on $\mathbb{R}$ with finite mean. If $(F_N(0,v))_{N \geq 2}$ and $(H_N(0,v))_{N \geq 2}$ are respectively $f,h$-chaotic families with respect to the $T1$ metric, and the $T1$ distance 
between $F_N(0,\cdot)$ and $f^{\otimes N}$, and between $H_N(0,\cdot)$ and $h^{\otimes N}$ are bounded uniformly in $N$. Furthermore, let $F_N, H_N$ are the solution to the $N$-particle coupled Kac's master equation with this initial data, then there exists a $C$ (the bound between the initial data and the tensorised form) of $N$ such that 
\[ d_{T1,1}(\Pi_1[F_N](t), \Pi_1[H_N](t)) \leq (C + \sqrt{N} d_{T1, 1}(f,h))e^{-\mu t/4}. \]
\end{thm}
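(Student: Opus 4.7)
My plan is to combine the $N$-particle contraction provided by part 2 of Theorem \ref{convergence} with the triangle inequality, introducing the solutions starting from tensorised initial data as intermediaries. Let $\tilde F_N$ and $\tilde H_N$ denote the solutions of \eqref{mastereq} with initial data $f^{\otimes N}$ and $h^{\otimes N}$ respectively. Applying the triangle inequality,
\[ d_{T1,1}(\Pi_1 F_N(t), \Pi_1 H_N(t)) \leq d_{T1,1}(\Pi_1 F_N(t), \Pi_1 \tilde F_N(t)) + d_{T1,1}(\Pi_1 \tilde F_N(t), \Pi_1 \tilde H_N(t)) + d_{T1,1}(\Pi_1 \tilde H_N(t), \Pi_1 H_N(t)), \]
reduces the task to estimating each of the three terms separately.

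A preliminary ingredient I need is that marginalisation contracts the $T1$ distance: since $\widehat{\Pi_1 G}(\xi_1) = \hat G(\xi_1, 0, \ldots, 0)$ for any density $G$ on $\mathbb{R}^N$, evaluating the supremum defining $d_{T1,N}$ on vectors whose only non-zero component is the first gives $d_{T1,1}(\Pi_1 G, \Pi_1 G') \leq d_{T1,N}(G, G')$. Applied to the outer two terms above and followed by Theorem \ref{convergence} part 2, these are bounded by $e^{-\mu t/4} d_{T1,N}(F_N^0, f^{\otimes N})$ and $e^{-\mu t/4} d_{T1,N}(H_N^0, h^{\otimes N})$ respectively, both of which are uniformly bounded in $N$ by hypothesis. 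This contributes a term of the form $C\, e^{-\mu t/4}$ with $C$ independent of $N$.

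For the middle term the same contraction plus Theorem \ref{convergence} part 2 reduces the problem to bounding $d_{T1,N}(f^{\otimes N}, h^{\otimes N})$ in terms of $d_{T1,1}(f,h)$, and this is where the $\sqrt{N}$ in the statement originates. A telescoping identity applied to the product of characteristic functions, combined with $|\hat f|, |\hat h| \leq 1$, gives
\[ |\hat f(\xi_1)\cdots \hat f(\xi_N) - \hat h(\xi_1)\cdots \hat h(\xi_N)| \leq \sum_{i=1}^N |\hat f(\xi_i) - \hat h(\xi_i)| \leq d_{T1,1}(f,h) \sum_{i=1}^N |\xi_i|. \]
The Cauchy--Schwarz inequality $\sum_i |\xi_i| \leq \sqrt{N}\,|\xi|$ then yields $d_{T1,N}(f^{\otimes N}, h^{\otimes N}) \leq \sqrt{N}\, d_{T1,1}(f,h)$; combining the three pieces and applying Theorem \ref{convergence} part 2 a third time produces the claimed bound.

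The heavy lifting has already been done by Theorem \ref{convergence}, so this argument should be short. The only point requiring care is that in general $F_N^0$, $\tilde F_N^0$, $H_N^0$ and $\tilde H_N^0$ can all have different means, which is precisely what forces one into the $T1$ version of the stability estimate rather than the $GTW$ version; this is why the decay rate here is $\mu/4$ rather than the $\mu/2$ of Theorem \ref{chaos}. The telescoping plus Cauchy--Schwarz step also clarifies why the factor multiplying $d_{T1,1}(f,h)$ is $\sqrt{N}$ and not some smaller function of $N$, and it is the only place in the argument where a genuinely $N$-dependent loss is incurred.
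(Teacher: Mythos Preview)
Your argument is correct and uses the same three ingredients as the paper: the marginal control $d_{T1,1}(\Pi_1 G,\Pi_1 G')\le d_{T1,N}(G,G')$ (Lemma \ref{controlmarginals}), the $T1$ contraction from Theorem \ref{convergence}, and the tensorised bound $d_{T1,N}(f^{\otimes N},h^{\otimes N})\le\sqrt{N}\,d_{T1,1}(f,h)$. The only cosmetic difference is that the paper applies the contraction once to the pair $(F_N,H_N)$ and then splits by the triangle inequality at $t=0$, whereas you introduce the auxiliary solutions $\tilde F_N,\tilde H_N$ and split at time $t$; since the contraction estimate respects the triangle inequality these two orderings give the identical bound.
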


We can also prove two similar theorems in Wasserstein distance on measures with finite second moment. The Wasserstein distance is given by
\[ \mathcal{W}_{2,d}(\mu, \nu)= \inf_{\pi}\left(\int_{\mathbb{R}^{2d}} \|\mathbf{x}-\mathbf{y}\|^2 \pi(\mathrm{d}\mathbf{x}, \mathrm{d}\mathbf{y}) \right)^{1/2}, \] here $\pi$ ranges over measures with marginals $\mu, \nu$.
\begin{thm}
If $\mu_N$ and $\nu_N$ are two solutions to the master equation with finite second moments then
\[ \mathcal{W}_2(\mu_N(t), \nu_N(t)) \leq e^{-\mu t/2} \mathcal{W}_2(\mu_N(0), \nu_N(0)). \] \label{wassersteincontraction}
\end{thm}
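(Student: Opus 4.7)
The plan is a coupling argument. I would realise $\mu_N$ and $\nu_N$ as the marginals of a single $N$-particle jump process on $\mathbb{R}^{2N}$ and track the expected squared Euclidean distance. Concretely, start from an optimal $\mathcal{W}_2$-coupling $(V(0),W(0))$ of the two initial laws and drive both copies with shared randomness: the same Poisson clocks (one of rate $2\lambda/(N-1)$ for each unordered pair $(i,j)$, one of rate $\mu$ for each particle $j$), the same rotation angle $\theta$ at each Kac event, and the same pair $(w,\theta)$ with $w\sim g$ at each thermostat event. Each marginal is then Markovian with the generator encoded by the master equation \eqref{mastereq}, so $(V(t),W(t))$ is, at every time, a valid coupling of $(\mu_N(t),\nu_N(t))$.

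Next I would compute the action of the coupled generator on $D(v,w):=\|v-w\|^2=\sum_j(v_j-w_j)^2$. A Kac event contributes zero, because the two-dimensional rotation acting on $(V_i,V_j)$ and on $(W_i,W_j)$ is a plane isometry and so preserves $(V_i-W_i)^2+(V_j-W_j)^2$; the other coordinates are untouched. At a thermostat event on particle $j$ the common bath velocity cancels in the difference, leaving $V_j'-W_j'=(V_j-W_j)\cos\theta$; averaging $\cos^2\theta$ against $\theta$ uniform on $[0,2\pi)$ produces a factor of $1/2$, and summing over the $N$ independent rate-$\mu$ thermostat clocks yields a contribution $-\tfrac{\mu}{2}D$ to the generator acting on $D$. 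Gronwall's inequality then gives exponential decay of $\mathbb{E}[\|V(t)-W(t)\|^2]$ at the appropriate rate, and taking square roots together with infimising over initial couplings delivers the claimed contraction of $\mathcal{W}_2$.

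The main obstacle is not conceptual but technical: one has to verify that the synchronous coupling actually defines a joint Markov process whose two marginals individually solve the master equation, and that Dynkin's formula may be legitimately applied to the unbounded quadratic $D$. Both follow from standard Poissonian constructions together with propagation of the second moment along the flow, which is available under the running assumption that $g$ has finite variance and that the initial data have finite second moments. Conceptually the argument is clean, and reassuringly the resulting rate is consistent with the $L^2$ spectral gap found in \cite{BLV14}: all the contraction comes from the thermostat, while the Kac mixing is exactly neutral with respect to this coupling.
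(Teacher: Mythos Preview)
Your coupling argument is exactly the paper's approach: the paper also couples the two systems by sharing all Poisson clocks, rotation angles, and bath velocities, writes the process as an integral against Poisson point processes, and tracks $\sum_i \Delta_{i,t}^2$, observing that the Kac contribution vanishes identically (your isometry remark) while only the thermostat drives the decay, then infimises over initial couplings.

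One arithmetic point to reconcile: your stated generator contribution $-\tfrac{\mu}{2}D$ would, after Gronwall and the square root, give $\mathcal{W}_2(t)\le e^{-\mu t/4}\mathcal{W}_2(0)$ rather than the claimed $e^{-\mu t/2}$. The paper reaches rate $-\mu$ on $\mathbb{E}\bigl[\sum_i\Delta_{i,t}^2\bigr]$ by carrying a prefactor $2\mu$ (not $\mu$) in its Poisson representation of the thermostat term. You should check which normalisation of the thermostat clock is actually consistent with the master equation \eqref{mastereq} and with the energy computation in Lemma~\ref{energylemma}; as written, rate $\mu$ per particle gives $-\tfrac{\mu}{2}$ on the squared distance, so either your clocks should ring at rate $2\mu$ or the exponent in the theorem is $\mu/4$.
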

\begin{thm}
Suppose that $\mu_N(t)$ and $\nu_N(t)$ are solutions to the master equation at time $t$, with initial data $\mu_0^{\otimes N}$ and $\nu_0^{\otimes N}$ then we have that for any $N$,
\[ \mathcal{W}_{2,1}(\Pi_1(\mu_N(t)), \Pi_1(\nu_N(t))) \leq e^{-\mu t/2} \mathcal{W}_{2,1}(\mu_0, \nu_0).  \] \label{wassersteinmarginals}
\end{thm}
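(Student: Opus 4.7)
The plan is to combine the full-space Wasserstein contraction of Theorem \ref{wassersteincontraction} with a symmetrisation argument that converts the $N$-particle estimate into a bound on the first marginal, and then to dispose of the tensorised initial data by a product-coupling estimate.

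First I would observe that the tensorised initial data propagate exchangeability: since the generator $\mathcal{L}$ is invariant under coordinate permutations and the initial data $\mu_0^{\otimes N}, \nu_0^{\otimes N}$ are symmetric, the solutions $\mu_N(t)$ and $\nu_N(t)$ remain symmetric under all permutations of coordinates for every $t \geq 0$.

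Next comes the heart of the argument: for any two symmetric probability measures $\mu, \nu$ on $\mathbb{R}^N$ with finite second moments I would prove
\[ \mathcal{W}_{2,1}^2(\Pi_1 \mu, \Pi_1 \nu) \leq \frac{1}{N} \mathcal{W}_{2,N}^2(\mu, \nu). \]
Start from an optimal coupling $\pi$ of $\mu, \nu$ and symmetrise it as $\tilde\pi = \frac{1}{N!} \sum_{\sigma \in S_N}(\sigma, \sigma)_{\#} \pi$. By symmetry of $\mu, \nu$, $\tilde\pi$ is still a coupling of $\mu$ and $\nu$, and its total transport cost is unchanged because permutations preserve the Euclidean norm. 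The projection of $\tilde\pi$ onto the $(x_1, y_1)$ plane is a (non-optimal) coupling of $\Pi_1\mu$ and $\Pi_1\nu$, and a direct orbit count gives its cost as $\frac{1}{N}\int \|x-y\|^2 \, d\pi = \frac{1}{N}\mathcal{W}_{2,N}^2(\mu, \nu)$.

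The remaining ingredient is a tensor-product upper bound on the initial Wasserstein cost: using the $N$-fold product $\pi_0^{\otimes N}$ of the optimal one-dimensional coupling as a test coupling yields $\mathcal{W}_{2,N}^2(\mu_0^{\otimes N}, \nu_0^{\otimes N}) \leq N \mathcal{W}_{2,1}^2(\mu_0, \nu_0)$. Chaining the marginal inequality, Theorem \ref{wassersteincontraction}, and this product bound produces
\[ \mathcal{W}_{2,1}^2(\Pi_1 \mu_N(t), \Pi_1 \nu_N(t)) \leq \tfrac{1}{N} e^{-\mu t}\mathcal{W}_{2,N}^2(\mu_0^{\otimes N}, \nu_0^{\otimes N}) \leq e^{-\mu t}\mathcal{W}_{2,1}^2(\mu_0, \nu_0), \]
and the claim follows by taking square roots. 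The only delicate step is the symmetrisation; everything else is routine. Notice that the factor $N$ from tensorising the initial data is exactly cancelled by the $1/N$ gained from the marginal-of-symmetric inequality, which is why the final estimate is uniform in $N$.
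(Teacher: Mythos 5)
Your proof is correct and follows essentially the same route as the paper: the chain ``marginal inequality $\Rightarrow$ $N$-particle contraction (Theorem \ref{wassersteincontraction}) $\Rightarrow$ product-coupling bound on the tensorised initial data'' is exactly the paper's argument, with the $\sqrt{N}$ factors cancelling in the same way. The only cosmetic difference is in the marginal lemma, where you symmetrise the optimal coupling before projecting onto the first coordinates, while the paper projects the coupling onto each coordinate pair and sums the resulting lower bounds; both uses of exchangeability yield the same factor $1/N$.
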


\section{Behaviour of the Moments}
In this section we prove some basic lemmas on how the moments of a solution behave. We recall that $K_g$ is the second moment of $g$ our fixed distribution.
\begin{lemma}
The kinetic energy of a solution to the coupled master equation converges exponentially fast to $NK_g$ with rate $\mu/2$. \label{energylemma}
\end{lemma}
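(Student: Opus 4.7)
The plan is to test the master equation against $\phi(v) = |v|^2 = \sum_{k=1}^N v_k^2$ and show that $E(t) := \int F_N(t,v)\,\phi(v)\,dv$ satisfies the ODE $\frac{d}{dt}E(t) = -\frac{\mu}{2}(E(t) - NK_g)$, which integrates immediately to the claimed exponential rate. The two pieces of $\mathcal{L}$ are handled separately: the Kac collision piece drops out by energy conservation, and only the thermostat pieces contribute.

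For the Kac piece, the pairwise rotation $(v_i, v_j) \mapsto (v_i\cos\theta + v_j\sin\theta,\, -v_i\sin\theta + v_j\cos\theta)$ preserves $v_i^2 + v_j^2$ and leaves every other coordinate untouched, so $\int Q[F_N]\,\phi\,dv = \int F_N\,\phi\,dv$ and the term $-\lambda N(I-Q)[F_N]$ contributes zero to $\frac{d}{dt}E$.

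For each thermostat piece I would compute $\int R_j[F_N]\,\phi\,dv$ by the volume-preserving change of variables $(v_j, w) \mapsto (u, z) = (v_j\cos\theta + w\sin\theta,\ w\cos\theta - v_j\sin\theta)$ inside the joint $(v,w)$ integral. The inverse gives $v_j^2 = u^2\cos^2\theta + z^2\sin^2\theta - 2uz\cos\theta\sin\theta$; after renaming $(u,z)$ back to $(v_j, w)$ and averaging over $\theta$ using $\intbar \cos^2\theta = \intbar \sin^2\theta = \tfrac{1}{2}$ and $\intbar \cos\theta\sin\theta = 0$, the cross term vanishes and one obtains
\[ \int R_j[F_N]\,\phi\,dv = \int F_N\!\!\sum_{i\neq j}\!v_i^2\,dv + \tfrac{1}{2}\!\int F_N\,v_j^2\,dv + \tfrac{K_g}{2}. \]
Subtracting this from $\int F_N\,\phi\,dv$ and summing over $j$ yields $\sum_j \int (I-R_j)[F_N]\,\phi\,dv = \tfrac{1}{2}E(t) - \tfrac{NK_g}{2}$.

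Combining the two pieces gives $\frac{d}{dt}E(t) = -\frac{\mu}{2}(E(t) - NK_g)$, and direct integration produces $E(t) - NK_g = (E(0) - NK_g)\,e^{-\mu t/2}$. I do not anticipate a serious obstacle: the only care needed is in setting up the $(v_j, w)$ rotation and tracking the trigonometric averages, and it is worth noting that the assumption $\int w\,g(w)\,dw = 0$ is not required at this step because the linear-in-$w$ term is already annihilated by the $\theta$-average.
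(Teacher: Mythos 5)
Your proposal is correct and follows essentially the same route as the paper: both arguments kill the Kac term by rotation invariance of $\|v\|^2$ and handle each thermostat term via the Jacobian-one rotation in the $(v_j,w)$ plane followed by the averages $\intbar\cos^2\theta = \intbar\sin^2\theta = \tfrac12$ and $\intbar\cos\theta\sin\theta = 0$. The paper organizes the thermostat computation through the conservation identity $\|v\|^2+w^2=\|v_j(w,\theta)\|^2+w_j^{*2}$ and subtracts the $w_j^{*2}$ contribution, whereas you expand $v_j^2$ directly in the rotated variables, but this is the same calculation rearranged and yields the identical ODE $\partial_t K=-\tfrac{\mu}{2}(K-NK_g)$.
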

\begin{proof}
Let
\[ K(t) = \int_{\mathbb{R}^n} \|v\|^2 F_N(v) \mathrm{d}v. \] Differentiating under the integral and recalling that radial functions are in the kernel of $(I-Q)$ and that $(I-Q)$ is self adjoint we get,
\[ \partial_t K = \mu \sum_{j=1}^N \int_{\mathbb{R}^N} \mathrm{d}v \int \mathrm{d}w \intbar_0^{2\pi}\mathrm{d}\theta g(w_j^*)F_N(v_j(w,\theta))\|v\|^2 - \mu N K. \] The Jacobian of the change of variables $(v_j(w,\theta),w_j^*) \leftrightarrow (v,w)$ is $1$. Also we have that $\|v\|^2 + w^2 = \|v_j(w,\theta)\|^2 +w_j^{*2}$. Using these we have

\begin{align*}
\partial_t K =& \mu \sum_{j=1}^N \int_{\mathbb{R}^N} \mathrm{d}v \int \mathrm{d}w \intbar_0^{2\pi} \mathrm{d}\theta g(w)F_N(v)(\|v\|^2 +w^2)\\
&-\mu \sum_{j=1}^N \int_{\mathbb{R}^N} \mathrm{d}v \int \mathrm{d}w \intbar_0^{2\pi} \mathrm{d}\theta g(w) F_N(v) w_j^{*2} - \mu N K,\\
&=\mu N K +\mu N K_g - \mu N K\\
&-\mu \sum_{j=1}^N \int_{\mathbb{R}^N} \mathrm{d}v \int \mathrm{d}w \intbar_0^{2\pi} \mathrm{d}\theta g(w) F_N(v)(w^2 \cos^2 \theta -2wv_j \cos \theta \sin \theta \\ & + v_j^2 \sin^2 \theta), \\
&=\mu N K_g -\mu N \frac{1}{2}K_g - \frac{\mu}{2}K,\\
&= - \frac{\mu}{2}(K-NK_g).
\end{align*}
\end{proof}
\begin{lemma}
The first moments of a solution to the coupled master equation converge to $0$ with rate greater than $\mu/2$. Also the second order moments \[ d_{k,l}=\int_{\mathbb{R}^N}F_N(v)v_k v_l \mathrm{d}v, \] converge to $0$ with rate greater than $\mu/2$. \label{firstmoments}
\end{lemma}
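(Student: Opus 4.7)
The plan is to differentiate the relevant moments under the integral sign and deploy the same two tools that drove the proof of Lemma~\ref{energylemma}: the self-adjointness of $(I-Q)$ on $L^2(\mathbb{R}^N)$, so that $\int \phi(v)(I-Q)[F_N]\,\mathrm{d}v = \int F_N(v)(I-Q)[\phi]\,\mathrm{d}v$, and the volume-preserving change of variables $(v,w)\leftrightarrow(v_j(w,\theta),w_j^*)$, which allows every $\int\phi\, R_j[F_N]\,\mathrm{d}v$ to be rewritten with the $\theta$-integrals at the outside. Apart from bookkeeping, the whole argument rests on the elementary vanishing identities $\intbar_0^{2\pi}\cos\theta\,\mathrm{d}\theta=\intbar_0^{2\pi}\sin\theta\,\mathrm{d}\theta=\intbar_0^{2\pi}\sin2\theta\,\mathrm{d}\theta=\intbar_0^{2\pi}\cos2\theta\,\mathrm{d}\theta=0$.

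For the first moments $m_k(t)=\int F_N(v)v_k\,\mathrm{d}v$, I apply this scheme with $\phi=v_k$. Any collision pair $(i,j)$ with $k\notin\{i,j\}$ leaves $v_k$ unchanged, while any pair containing $k$ rotates $v_k$ into a $\cos\theta/\sin\theta$ combination whose $\theta$-average is zero, so $(I-Q)[v_k]=\tfrac{2}{N}v_k$. On the thermostat side, the change of variables gives $\int v_k R_j[F_N]\,\mathrm{d}v=m_k$ when $j\neq k$, and when $j=k$ the integrand becomes a linear combination of $\cos\theta$ and $\sin\theta$ and averages to zero. Combining these,
\[ \partial_t m_k = -2\lambda m_k - \mu m_k = -(2\lambda+\mu)m_k,\]
so $m_k(t)$ decays exponentially at rate $2\lambda+\mu>\mu/2$.

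For the off-diagonal second moments $d_{k,l}$ with $k\neq l$, I take $\phi=v_k v_l$. Pairs $(i,j)$ disjoint from $\{k,l\}$ preserve $v_k v_l$; pairs containing exactly one of $k,l$ introduce an un-averaged $\cos\theta$ or $\sin\theta$ and contribute zero; and the new pair $(i,j)=(k,l)$ produces
\[ (v_k\cos\theta+v_l\sin\theta)(-v_k\sin\theta+v_l\cos\theta)=\tfrac{1}{2}(v_l^2-v_k^2)\sin2\theta+v_kv_l\cos2\theta,\]
which also averages to zero. Hence $(I-Q)[v_kv_l]=\tfrac{4N-6}{N(N-1)}v_kv_l$. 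The $R_j$ side behaves analogously: $j\notin\{k,l\}$ contributes $d_{k,l}$ while for $j\in\{k,l\}$ every term in the resulting integrand carries a $\cos\theta$ or $\sin\theta$ and averages to zero. Adding the pieces,
\[ \partial_t d_{k,l}=-\left(\frac{\lambda(4N-6)}{N-1}+2\mu\right)d_{k,l},\]
which decays at rate at least $2\mu>\mu/2$.

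The main obstacle is organising the case analysis without error: one must correctly enumerate the configurations of the collision pair $(i,j)$ relative to the indices $k$ (and $l$), and correctly track which trigonometric average makes each would-be contribution vanish. The restriction $k\neq l$ in the second-moment statement is essential: the analogous computation for $d_{kk}$ produces a coupled linear system on the diagonal entries whose slowest eigenvalue is precisely $-\mu/2$ in the symmetric direction, consistent with Lemma~\ref{energylemma} and with the fact that each $d_{kk}$ converges to $K_g$ rather than to $0$.
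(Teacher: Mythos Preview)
Your proof is correct and follows essentially the same route as the paper: differentiate the moment, push the operators onto the test function via self-adjointness and the rotation change of variables, and use the trigonometric averages to kill all but the diagonal contributions. The paper is much terser (it records only the resulting ODEs $\partial_t d_k=-(2\lambda+\mu)d_k$ and $\partial_t d_{k,l}=(-4\lambda-2\mu+\tfrac{2\lambda}{N-1})d_{k,l}$, which agree with yours after simplifying $\tfrac{4N-6}{N-1}=4-\tfrac{2}{N-1}$), and your explicit case analysis together with the remark on why $k\neq l$ is needed is a helpful addition.
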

\begin{proof}
Let $d_k = \int \mathrm{d}v F_N(v) v_k$ then we get the equation
\begin{align*}
\partial_t d_k &= - N(\lambda + \mu) d_k + \lambda(N-2)d_k + \mu(N-1)d_k,\\
&=-(2\lambda + \mu)d_k.
\end{align*} For the second set we can calculate
\[ \partial_t d_{k,l} =\left (-4\lambda -2\mu + \frac{2\lambda}{N-1}\right)d_{k,l} \]
\end{proof}

\section{Existence, Uniqueness and Convergence to a Steady State}
We wish to show existence and uniqueness of a steady state via the Banach fixed point theorem in the space of probability measures with zero mean and finite second moment with the GTW distance. In order to do this we write the steady state equation for $F_N$ as a fixed point theorem. We set $\gamma = \lambda/(\lambda + \mu)$ to mirror the notation in \cite{CLM15}.
\[ F_N = \gamma Q[F_N] + (1-\gamma) \frac{1}{N} \sum_{j=1}^N R_j[F_N]=\Phi[F_N].\] We want to show that $\Phi$ is a contraction in the Gabetta-Toscani-Wennberg metric. We first need to show that $\Phi$ preserves the metric space that we are working in. 
\begin{lemma}
Suppose $F_N$ has mean zero and finite second moment then $\Phi[F_N]$ has mean zero and finite second moment.
\end{lemma}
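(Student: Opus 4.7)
The plan is to check each property (probability measure, mean zero, finite second moment) by unpacking the definitions of $Q$ and $R_j$ and applying the change of variables used already in the energy lemma.

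First I would verify that $\Phi[F_N]$ is a probability density. Positivity is immediate. For the normalisation, note that the map $v \mapsto v_{i,j}(\theta)$ is a rotation on the $(v_i, v_j)$-plane (Jacobian $1$), so $\int Q[F_N](v)\, dv = 1$; and the map $(v, w) \mapsto (v_j(w,\theta), w_j^*)$ also has Jacobian $1$, so combined with $\int g = 1$ we get $\int R_j[F_N](v)\, dv = 1$. Hence $\int \Phi[F_N] = \gamma + (1-\gamma) = 1$.

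Next I would show the mean is zero. Using the rotation change of variables on the $(i,j)$-plane, $\int v_k Q[F_N](v)\, dv$ becomes, after writing $v_k$ as a linear combination of the new coordinates, an integral that either reduces to $\int u_k F_N(u)\, du = 0$ (when $k \neq i,j$), or picks up factors of the form $\fint_0^{2\pi} \cos\theta\, d\theta$ and $\fint_0^{2\pi} \sin\theta\, d\theta$, both zero, multiplied against the (already-zero) first moments of $F_N$. For $R_j$, the analogous change of variables $(v,w)\mapsto (v_j(w,\theta), w_j^*) = (u,z)$ reduces $\int v_k R_j[F_N]$ to $\int u_k F_N(u)\, du \cdot \int g(z)\, dz = 0$ for $k \neq j$, and to an angular average of $\cos\theta \int u_j F_N\, du - \sin\theta \int z\, g(z)\, dz$ for $k = j$; this vanishes because both $F_N$ and $g$ have zero first moment, and also because $\fint \cos\theta = \fint \sin\theta = 0$.

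For finiteness of the second moment I would use the identity $\|v\|^2 = \|u\|^2$ under the $Q$-change of variables (so $\int \|v\|^2 Q[F_N] = K < \infty$) and the identity $\|v\|^2 + w^2 = \|u\|^2 + z^2$ under the $R_j$-change of variables. The latter gives
\begin{align*}
\int \|v\|^2 R_j[F_N](v)\, dv &= \fint_0^{2\pi} d\theta \int du\, dz\, g(z)F_N(u)\bigl(\|u\|^2 + z^2 - (u_j\sin\theta + z\cos\theta)^2\bigr)\\
&= K + K_g - \tfrac12 d_{jj} - \tfrac12 K_g,
\end{align*}
where $d_{jj} = \int u_j^2 F_N(u)\, du \le K$, so each term is finite. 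Summing $\gamma$ times the $Q$-contribution and $(1-\gamma)/N$ times the $R_j$-contributions gives $\int \|v\|^2 \Phi[F_N]\, dv < \infty$.

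There is no real obstacle here; the only point requiring care is the bookkeeping in the $R_j$ change of variables, in particular keeping track of which vanishing comes from the angular averages and which requires the hypothesis that $g$ has zero mean. In spirit the argument is the same as the energy computation in Lemma \ref{energylemma}, just applied to the stationary form of the equation rather than to $\partial_t$.
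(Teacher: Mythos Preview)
Your proof is correct and follows essentially the same route as the paper: change variables by the rotations $v\mapsto v_{i,j}(\theta)$ and $(v,w)\mapsto(v_j(w,\theta),w_j^*)$, reduce the first moments of $Q[F_N]$ and $R_j[F_N]$ to angular averages of $\cos\theta,\sin\theta$ times first moments of $F_N$ and $g$, and handle second moments by the energy-type identity. The only cosmetic differences are that you add an explicit normalisation check (the paper omits this) and you carry out the second-moment computation for $R_j$ in full, whereas the paper simply observes that $Q^*$ and $R_j^*$ send $\|v\|^2$ to finite linear combinations of second-moment functionals.
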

\begin{proof}
\begin{align*} \int_{\mathbb{R}^N}Q[F_N]v_k \mathrm{d}v =& \frac{N-2}{N}\int_{\mathbb{R}^N}F_N(v)v_k \mathrm{d}v\\ & + \frac{1}{{N\choose 2}} \sum_{i<k} \int_{\mathbb{R}^N}\intbar_0^{2\pi}F_N(v)(v_i\cos \theta +v_k \sin \theta)\mathrm{d}\theta \mathrm{d}v \\
& + \frac{1}{{N \choose 2}} \sum_{k<j} \int_{\mathbb{R}^N}\intbar_0^{2\pi} F_N(v)(-v_k \sin \theta + v_j \cos \theta) \mathrm{d}\theta \mathrm{d}v,\\
=& \frac{N-2}{N} \int_{\mathbb{R}^N} F_N(v) v_k \mathrm{d}v = 0. \end{align*} It is immediate that $\int R_j[F_N](v)v_k \mathrm{d}v =0$ for $j \neq k$. So it remains to look at
\begin{align*} \int_{\mathbb{R}^N}\mathrm{d}v R_k[F_N](v)v_k &= \int_{\mathbb{R}^N} \int \mathrm{d}w \intbar_0^{2\pi}\mathrm{d}\theta g(w_j^*)F_N(v_j(w,\theta) v_k\\
&= \intbar_0^{2\pi}\mathrm{d}\theta \int_{\mathbb{R}^N} \int \mathrm{d}v \mathrm{d}w g(w)F_N(v)(v_k \cos \theta - w \sin \theta)=0.  \end{align*} The fact that $\Phi[F_N]$ has finite second moments is clear since $Q^*, R_j^*$ acting on $\|v\|^2$ or similar produces a finite linear combination of other functions to make second moments.
\end{proof}
Further we would like to calculate how $Q$ and $R_j$ act in Fourier space.
\begin{lemma} \label{operatorsinfourier}
\[ \widehat{Q[F_N]}(\xi) = \frac{1}{{N \choose 2}} \sum_{k<j} \intbar_0^{2\pi} \widehat{F_N}(\xi_{k,j}) \mathrm{d}\theta, \] where $\xi_{k,j} = (\xi_1, \dots,\xi_k \cos \theta + \xi_j \sin \theta, \dots, -\xi_k \sin \theta + \xi_j \cos \theta , \dots, \xi_N)$.
Also,
\[ \widehat{R_j[F_N]}(\xi) = \intbar_0^{2\pi} \widehat{F_N}(\xi_j(\theta)) \hat{g}(\xi_j \sin \theta) \mathrm{d}\theta, \] where $\xi_j(\theta)=(\xi_1 , \dots, \xi_j \cos \theta , \dots , \xi_N)$.
\end{lemma}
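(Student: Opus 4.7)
The plan is to treat the two formulas independently; both reduce to a direct change of variables in the defining integrals, together with the fact that the maps appearing in the model are rotations in appropriate two dimensional planes and therefore have unit Jacobian.

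First I would handle $Q$. Fix a pair $i<j$ and consider the single term $\intbar_0^{2\pi} F_N(v_{i,j}(\theta))\,\mathrm{d}\theta$. Compute its Fourier transform by swapping the $\theta$ integral with the $\mathrm{d}v$ integral (justified by Fubini since $F_N\in L^1$) and, for each fixed $\theta$, substitute $v'=v_{i,j}(\theta)$. This is a rotation in the $(v_i,v_j)$-plane fixing all other coordinates, so $\mathrm{d}v=\mathrm{d}v'$. Because rotations are orthogonal, $\xi\cdot v = (R_{i,j}(\theta)^{-1}\xi)\cdot v' = \xi_{k,j}\cdot v'$ (with the roles of $i,j$ on the frequency side being exactly the inverse rotation in the $(\xi_i,\xi_j)$-plane, which is the same rotation with $\theta$ reversed; renaming $\theta\mapsto-\theta$ in the outer average exploits $2\pi$-periodicity). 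Pulling the average back out gives the claimed formula after summing over $i<j$.

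For $R_j$ the idea is similar but the change of variables mixes velocity and bath variables. Write
\begin{equation*}
\widehat{R_j[F_N]}(\xi)=\int_{\mathbb{R}^N}\mathrm{d}v\int\mathrm{d}w\intbar_0^{2\pi}\mathrm{d}\theta\, g(w_j^*)\,F_N(v_j(w,\theta))\,e^{-i\xi\cdot v},
\end{equation*}
and, for each fixed $\theta$, apply the joint change of variables $(v_j,w)\leftrightarrow(v_j(w,\theta),w_j^*)$ with the other velocities unchanged. This is a rotation in the $(v_j,w)$-plane, so the Jacobian is $1$, as noted already in the proof of Lemma \ref{energylemma}. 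After relabelling the new variables $(v',w')$, one has $v_j=v'_j\cos\theta-w'\sin\theta$ while $v_k=v'_k$ for $k\neq j$, hence
\begin{equation*}
\xi\cdot v \;=\; \sum_{k\neq j}\xi_k v'_k + \xi_j v'_j\cos\theta - \xi_j w'\sin\theta \;=\; \xi_j(\theta)\cdot v' - \xi_j\sin(\theta)\,w'.
\end{equation*}
The integrals in $v'$ and $w'$ then separate, the first producing $\widehat{F_N}(\xi_j(\theta))$ and the second producing $\hat g(-\xi_j\sin\theta)$. Substituting $\theta\mapsto-\theta$ in the outer $\theta$-average (which leaves $\xi_j(\theta)$ invariant, since it depends only on $\cos\theta$, and flips the sign in $\sin\theta$) yields the stated expression with $\hat g(\xi_j\sin\theta)$.

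There is no conceptual obstacle here; the only thing that requires some care is not losing a sign when passing from the change of variables to the Fourier argument, and checking that the $\theta$-reversal step is legitimate because $\xi_j(\theta)$ depends on $\theta$ only through $\cos\theta$ and the $\mathrm{d}\theta$-average is over a full period.
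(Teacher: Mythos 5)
Your proposal is correct and follows essentially the same route as the paper: for each operator, exchange the $\theta$-average with the $v$ (and $w$) integration and perform the unit-Jacobian rotation change of variables in the relevant two-dimensional plane, which transfers the rotation onto the frequency variable and, for $R_j$, factors the $w$-integral into $\hat g$. Your extra care with the direction of the rotation and the $\theta\mapsto-\theta$ relabelling is a harmless refinement of the paper's (slightly more cavalier) bookkeeping, justified by the average over a full period.
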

\begin{proof}
\begin{align*}
\int_{\mathbb{R}^N}Q[F_N]e^{-iv\cdot \xi}\mathrm{d}v &= \frac{1}{{N \choose 2}} \sum_{k<j} \intbar_0^{2\pi} \mathrm{d}\theta \int_{\mathbb{R}^N} \mathrm{d}v F_M(v_{kj}(\theta))e^{-iv\cdot \xi},\\ &= \frac{1}{{N \choose 2}} \sum_{k<j} \intbar_0^{2\pi} \mathrm{d}\theta \int_{\mathbb{R}^N}\mathrm{d}v F_N(v)e^{-iv_{k,j}(\theta)\cdot \xi},\\
&=(2\pi)^{N/2}\frac{1}{{N \choose 2}} \sum_{k<j} \intbar_0^{2\pi} \mathrm{d}\theta \widehat{F_N}(\xi_{k,j}).
\end{align*} Where $\xi_{k,j} = (\xi_1,\dots,\xi_k \cos \theta + \xi_j \sin \theta,\dots,-\xi_k \sin \theta + \xi_j \cos \theta,\dots, \xi_N)$.
\begin{align*}
\int_{\mathbb{R}^N}\mathrm{d}v R_j[F_N] e^{-iv\cdot \xi} &= \intbar_0^{2\pi} \mathrm{d}\theta \int \mathrm{d}w \int_{\mathbb{R}^N}\mathrm{d}v g(w_j^*) F_N(v_j(w,\theta))e^{-iv\cdot \xi}\\
&=\intbar_0^{2\pi} \mathrm{d}\theta \int \mathrm{d}w \int_{\mathbb{R}^N} \mathrm{d}v g(w) F_N(v) e^{-iv_j(w,\theta)\cdot \xi}\\
&=(2\pi)^{N/2} \intbar_0^{2\pi}\mathrm{d}\theta \widehat{F_N}(\xi_j(\theta))\hat{g}(\xi_j \sin \theta).
\end{align*} Where $\xi_j(\theta) = (\xi_1,\dots,\xi_j \cos \theta,\dots,\xi_N)$. 
\end{proof}
Now we can show existence and uniqueness.
\begin{proof}[Proof of Theorem \ref{fixedpoint}]
Calculating we have
\begin{align*} &\widehat{\Phi[F_N]}(\xi)=\\&\frac{1}{(2\pi)^{N/2}}\left( \gamma \int_{\mathbb{R}^N} Q[F_N](v)e^{-v\cdot \xi}\mathrm{d}v + (1-\gamma)\frac{1}{N}\sum_{j=1}^N \int_{\mathbb{R}^N} R_j[F_N]e^{-iv\cdot \xi} \mathrm{d}v \right). \end{align*}
Using the results of \ref{operatorsinfourier} we have
\[ \widehat{\Phi[F_N]} = \intbar_0^{2\pi} \mathrm{d}\theta \left( \frac{\gamma}{{N \choose 2}} \sum_{i<j} \widehat{F_N}(\xi_{i,j}(\theta)) + \frac{1-\gamma}{N} \sum_{j=1}^N \widehat{F_N}(\xi_j(\theta)) \hat{g}(\xi_j \sin \theta) \right). \] Therefore
\begin{align*}
&\sup_{\xi \neq 0} \frac{|\widehat{\Phi[F_N]}(\xi)-\widehat{\Phi[H_N]}(\xi)|}{|\xi|^2}\\
& \leq \sup_{\xi \neq 0} \frac{|\widehat{F_N}(\xi)-\widehat{H_N}(\xi)|}{|\xi|^2} \intbar_0^{2\pi} \mathrm{d}\theta \left(\frac{\gamma}{{N \choose 2}} \sum_{i<j} \frac{|\xi_{i,j}(\theta)|^2}{|\xi|^2}+ \frac{1-\gamma}{N} \sum_{j=1}^N \hat{g}(\xi_j \sin \theta) \frac{|\xi_j(\theta)|^2}{|\xi|^2}  \right)\\
& \leq \left( \gamma + \frac{1-\gamma}{N}\left(N-\frac{1}{2}\right)  \right) d_{GTW}(F_N, H_N)\\
& \leq \left(1- \frac{1-\gamma}{2N}\right) d_{GTW}(F_N, H_N).
\end{align*} 
Here to go between the second and third line we used
\begin{align*}
\sum_{j=1}^N \hat{g}(\xi_j \sin \theta) \frac{|\xi_j(\theta)|^2}{|\xi|^2} & \leq \sum_{j=1}^N \frac{|\xi_j(\theta)|^2}{|\xi|^2} \\ & = \sum_{j=1}^N \frac{|\xi|^2 - \xi_j^2 \sin^2 \theta}{\|xi|^2} \\&= N-\sin^2 \theta.
\end{align*}
So we have the required contraction property for any fixed $N$. Which shows existence and uniqueness of a steady state thanks to the contraction mapping theorem. The moments being the same up to order 2 as $g$ follow from the lemmas on the behaviour of moments in the previous section.
\end{proof}
We also want to prove a contraction estimate in the $T1$ distance.
\begin{lemma} \label{t1contraction}
\[ d_{T1,N}(\Phi[F_N], \Phi[H_N]) \leq \left( 1-\frac{1-\gamma}{4N} \right)d_{T1, N}(F_N, H_N). \]
\end{lemma}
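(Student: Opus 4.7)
The plan is to mirror the Fourier-side contraction argument carried out in the $GTW$ case, replacing $|\xi|^2$ with $|\xi|$ in the normalising denominator. Using the formulae from Lemma \ref{operatorsinfourier} and writing $\Delta = \widehat{F_N} - \widehat{H_N}$ for brevity, I would start from
\[
\widehat{\Phi[F_N]}(\xi) - \widehat{\Phi[H_N]}(\xi) = \intbar_0^{2\pi} \mathrm{d}\theta \left( \frac{\gamma}{\binom{N}{2}} \sum_{i<j} \Delta(\xi_{i,j}(\theta)) + \frac{1-\gamma}{N} \sum_{j=1}^N \Delta(\xi_j(\theta))\, \hat g(\xi_j \sin \theta) \right).
\]
Dividing by $|\xi|$ and applying the pointwise bound $|\Delta(\eta)| \leq d_{T1,N}(F_N, H_N)\,|\eta|$ in each summand pulls $d_{T1,N}(F_N,H_N)$ outside and reduces the problem to estimating a weighted average of $|\xi_{i,j}(\theta)|/|\xi|$ and $|\hat g(\xi_j\sin\theta)|\,|\xi_j(\theta)|/|\xi|$.

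For the collisional part, $\xi \mapsto \xi_{i,j}(\theta)$ is an orthogonal rotation in the $(i,j)$-plane, so $|\xi_{i,j}(\theta)| = |\xi|$ and this block contributes exactly $\gamma$. For the thermostat part, since $g$ is a probability density I bound $|\hat g| \leq 1$, so the task becomes controlling $\sum_j |\xi_j(\theta)|/|\xi|$. The identity $|\xi_j(\theta)|^2 = |\xi|^2 - \xi_j^2 \sin^2\theta$ combined with the concavity inequality $\sqrt{1-x} \leq 1 - x/2$ on $[0,1]$ yields
\[
\frac{|\xi_j(\theta)|}{|\xi|} \leq 1 - \frac{\xi_j^2 \sin^2\theta}{2|\xi|^2}.
\]
Summing over $j$ and using $\sum_j \xi_j^2 = |\xi|^2$ gives $\sum_j |\xi_j(\theta)|/|\xi| \leq N - \sin^2\theta/2$, and a final average in $\theta$ against $\intbar_0^{2\pi} \sin^2\theta\, \mathrm{d}\theta = 1/2$ produces the bound $N - 1/4$. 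Putting the two contributions together yields the overall prefactor
\[
\gamma + \frac{1-\gamma}{N}\left( N - \frac{1}{4} \right) = 1 - \frac{1-\gamma}{4N},
\]
which is precisely the claimed contraction constant.

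The only real subtlety, and the one place where the proof genuinely departs from the $GTW$ argument, is the use of the first-order concavity bound $\sqrt{1-x} \leq 1 - x/2$: the naive estimate $|\xi_j(\theta)| \leq |\xi|$ would give only the useless factor $1$, whereas expanding the square root rescues a strict contraction. This is also why the rate here is halved relative to the $GTW$ lemma, where squaring makes the analogous quadratic improvement automatic.
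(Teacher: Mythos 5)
Your proof is correct and follows essentially the same route as the paper: the paper likewise reduces to the $GTW$ computation and invokes the concavity bound $(1-x^2)^{1/2}\leq 1-\tfrac{1}{2}x^2$ to obtain $\sum_j \hat g(\xi_j\sin\theta)|\xi_j(\theta)|/|\xi|\leq N-\tfrac{1}{2}\sin^2\theta$, hence the prefactor $1-\frac{1-\gamma}{4N}$ after averaging in $\theta$. Your remarks on the rotation invariance of $|\xi_{i,j}(\theta)|$ and on why the rate is halved relative to the $GTW$ case are accurate and simply make explicit what the paper leaves implicit.
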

\begin{proof}
The proof is the same as for the $GTW$ distance but here it is necessary to use 
\[ (1-x^2)^{1/2} \leq 1-\frac{1}{2}x^2, \] when bounding $|\xi_j(\theta)|/|\xi|$. This time we have
\begin{align*}
\sum_{j=1}^N \hat{g}(\xi_j \sin \theta)\frac{|\xi_j(\theta)|}{|\xi|} & \leq \sum_{j=1}^N \sqrt{\frac{|\xi|^2- \xi_j^2 \sin^2 \theta}{|\xi|^2}}\\ & \leq \sum_{j=1}^n \left( 1 - \frac{1}{2} \frac{\xi_j^2 \sin^2 \theta}{|\xi|^2} \right) \\& = N - \frac{1}{2}\sin^2 \theta.
\end{align*}
\end{proof}
Using these estimates we can also show convergence to equilibrium.
\begin{proof}[Proof of Theorem \ref{convergence}] Suppose initially that $F_N(t)$ and $H_N(t)$ both have zero mean. From the above calculation we have
\begin{align*} F_N(t+s)-H_N(t+s)=& (1-s(\lambda + \mu)N)(F_N(t)-H_N(t)) \\&+ s(\lambda + \mu) N ( \Phi[F_N(t)]-\Phi[H_N(t)]) + o(s). \end{align*} Therefore
\begin{align*}
d_{GTW}(F_N(t+s)&,H_N(t+s)) \leq (1-s(\lambda + \mu)N) d_{GTW}(F_N(t), H_N(t))\\& + s(\lambda + \mu)N d_{GTW}(\Phi[F_N], \Phi[H_N]) + o(s) \\
\leq & (1-s(\lambda + \mu))d_{GTW}(F_N(t),H_N(t)) \\&+ s(\lambda + \mu)N\left( 1- \frac{1-\gamma}{2N}  \right) d_{GTW}(F_N(t), H_N(t)) + o(s)\\
=&\left(1-\frac{\mu}{2}s\right) d_{GTW}(F_N(t), H_N(t)) + o(s).
\end{align*}
Hence,
\[  \frac{\mathrm{d}}{\mathrm{d}t} d_{GTW}(F_N(t),H_N(t)) \leq - \frac{\mu}{2} d_{GTW}(F_N(t), H_N(t)).\] So that we have exponential decrease with the stated rate. Since in \ref{firstmoments} we showed that if we start the dynamics with two distribution which have zero mean then this property will be preserved, we see that if we start the dynamics with a zero mean distribution then it will converge exponentially fast towards the steady state. Now we would like to add a correction term so that we can deal with a wider class of initial data as in \cite{CLM15}. We define
\[ \widehat{\mathcal{M}[F_N]} := \chi (\xi) \sum_{k=1}^N \left( \int_{\mathbb{R}^N}v_k F_N(v)\mathrm{d}v   \right)i \xi_k, \] where $\chi$ is a smooth, compactly supported function which is 1 in some neighbourhood of 0. Therefore, if $D_N = F_N - H_N - \mathcal{M}[F_N-H_N]$ we will have that
\[ \widehat{D_N} = \int_{\mathbb{R}^N} \mathrm{d}v \left(F_N(v)-H_N(v) \right)\left(e^{-iv \cdot \xi} - \chi (\xi) \sum_{j=1}^N v_j \xi_j  \right). \] This means that
\[ \sup_{\xi \neq 0} \frac{\widehat{D_N}(\xi)}{|\xi|^2}< \infty. \] We calculate that
\begin{align*}
\partial_t D_N =& \partial_t F_N - \partial_t H_N - \partial_t \mathcal{M}[F_N-H_N] \\ =& \lambda N(I-Q)[D_N] - \mu \sum_{j=1}^N (I-R_j)[D_N]\\
&- \lambda (I-Q)[\mathcal{M}[F_N - H_N]] - \mu \sum_{j=1}^N(I-R_j)[\mathcal{M}[F_N-H_N]] \\&- \partial_t \mathcal{M}[F_N - H_N].
\end{align*} So if we let
\[ W = -\lambda N(I-Q)[\mathcal{M}[F_N-H_N]] - \mu \sum_{j=1}^N (I-R_j)[\mathcal{M}[F_N-H_N]] - \partial_t \mathcal{M}[F_N - H_N], \] then $D_N$ is a zero momentum, zero integral function and we have the equation
\[ \partial_t D_N = -(\lambda + \mu)N(D_N-\Phi[D_N]) + W. \] So if we want to show that
\[ \sup_{\xi \neq 0} \frac{|\widehat{D_N}|}{|\xi|^2}, \] converges to zero exponentially fast it is sufficient to show that,
\[ \sup_{\xi \neq 0} \frac{|\widehat{W}(\xi)|}{|\xi|^2}, \] converges to zero exponentially fast. Since $\partial_t$ commutes with Fourier transform and $\chi$ is compactly supported we know that
\[ \widehat{\mathcal{M}}[F_N-H_N] = \chi(\xi) \sum_{k=1}^N (m_f(0)-m_h(0))e^{-(2\lambda + \mu)t}i\xi_k, \] So ignoring $\chi$ and looking near 0 we have, after Taylor expanding and using the formula from lemma \ref{operatorsinfourier}
\begin{align*}
& -\lambda N \widehat{(I-Q)[\mathcal{M}]} - \mu \sum_{j=1}^N \widehat{(I-R_j)[\mathcal{M}]} = \\
& -(2 \lambda + \mu)(m_f(0)-m_h(0)) e^{-(2\lambda + \mu)t} \sum_{k=1}^N \xi_k \\& - \frac{1}{2}\mu K_g (m_f(0)-m_h(0))e^{-(2\lambda + \mu)t} |\xi|^2 \sum_{k=1}^N \xi_k + o(|\xi|^3).
\end{align*} Therefore near $\xi = 0$, we have
\[ \frac{\widehat{W}(\xi)}{|\xi|^2} = -\frac{1}{2} \mu K_g \sum_{k=1}^N \xi_k + \frac{1}{2}\mu K_g \frac{\sum_{k=1}^N \xi_k^3}{|\xi|^2} + o(\xi). \] This is because the lower order terms cancel.
So in particular we have that 
\[ \lim_{\xi \rightarrow 0} \frac{\widehat{W}(\xi)}{|\xi|^2} = 0. \] Therefore, since $\widehat{W}$ has compact support we can bound 
\[ \frac{\widehat{W}(\xi)}{|\xi|^2} \leq C e^{-(2 \lambda + \mu)t} \] where $C$ may increase with $N$. At 0 the gradient of 
\[ w(\xi)=\frac{\hat{W}(\xi)}{|\xi|^2} \] is $C\sqrt{N}\mu K_g /2$ so the gradient of $w$ cannot be bounded uniformly in $N$. Since we can calculate $w(\xi)$ explicitly if $\chi$ is always radial as
\[ \mu \left( 1 - \sum_{j=1}^N(1-\alpha_j(\xi)) \right) \frac{\mathcal{M}}{|\xi|^2} \] where
\[ \alpha_j(\xi)= \intbar_0^{2\pi} \left( 1-\frac{\xi_j (1-\cos \theta)}{\sum_k \xi_k}\right) \hat{g}(\xi_j \sin \theta ) \frac{\chi(\xi_j(\theta))}{\chi (\xi)} \mathrm{d}\theta. \] This can be bounded uniformly provided we can bound the ration of the $\chi$s. Therefore under these additional assumptions we see that $w$ increases no faster than $\sqrt{N}$. This will give that
\[ \sup_{\xi \neq 0} \frac{|\widehat{D_N(t)}|}{|\xi|^2} \leq \left( C \sqrt{N} + \frac {|\widehat{D_N}(0)|}{|\xi|^2}\right) e^{-\frac{\mu}{2}t}. \] Therefore if we define a new distance
\[ \tilde{d}_N(F_N, H_N) = \sup_{\xi \neq 0} \frac{|\widehat{D_N}|}{|\xi|^2} + \sup_{\xi \neq 0} \frac{|\widehat{W}|}{|\xi|^2}, \] we will get the inequality
\[ \tilde{d}_N(F_N(t),H_N(t)) \leq Ce^{-\frac{\mu}{2}t}. \]
For the exponential convergence in the $T1$ distance we use the same argument as for the $GTW$ distance with the same mean and the contraction estimate in Lemma \ref{t1contraction}.
\end{proof}
\begin{remark}
If it were possible to get a bound on $|\nabla w(\xi)|$ in terms of $\sqrt{N}$ then it might in fact allow us to choose $\chi$ for each $N$ such that we didn't get the increase with $N$ by letting the radius of the support of $\chi$ decrease with $\sqrt{N}$. However, since the goal is to control the behaviour as $N \rightarrow \infty$ then in the case of different marginals working with the correction term would introduce an error of at least $\sqrt{N}$ when trying to control the initial data by its first marginal. In general because of having to choose a $\chi$ for each $N$ the altered distance is not well adapted to asymptotic analysis. We include it to show that for each $N$ we can get the rate $\mu/2$ and to compare with the limit equation case which is studied using this method in \cite{CLM15}.
\end{remark}
\section{Convergence Rate of the First Marginal} It is shown in \cite{BLV14} that propagation of chaos holds for this type of coupled Kac's model. The argument is very similar to Kac's original argument therefore is not repeated here. Since we have propagation of chaos we know that the first marginal of $F_N(t)$ will converge weakly towards a solution of the Boltzmann-Kac equation. In some sense we would like to be able to understand the two limits $t \rightarrow \infty$ and $N \rightarrow \infty$ simultaneously. For this reason we prove a bound on convergence to equilibrium for the first marginal which is uniform in $N$. Unfortunately, the $GTW$ distance and our correction term $W$ behave differently as $N \rightarrow \infty$ so it was only possible to get these estimates when the initial data has zero mean. 

The functions we work with will be invariant under permutations of variables so we can define the $k^{th}$ marginal for $k \leq N$
\[ \Pi_k[F_N] := \int_{\mathbb{R}^{N-k}} F_N(v_1,\dots,v_N)\mathrm{d}v_{i_1}\dots\mathrm{d}v_{i_{N-k}} \] for any choice of $1 \leq i_1 < i_2 < \dots < i_{N-k} \leq N$. Many of the distances in which we could study Kac's model, typically weighted $L^2$ distances will not behave well as the number of particles tends to infinity so will not give convergence of the first marginal to an equilibrium in entropy, here the subadditivity property of entropy in the number of variables is crucial. We wish to show that the GTW and related distances will possess similar subadditivity properties, which will allow us to control things in a similar way.
\begin{lemma}
\[d_{GTW,k}(\Pi_k[F_N], \Pi_k[H_N]) \leq d_{GTW,N}(F_N, H_N),\]
\[ \tilde{d}_k (\Pi_k[F_N], \Pi_k[H_N]) \leq \tilde{d}_k(F_N, H_N), \]
and
\[ d_{T1,k}(\Pi_k[F_N], \Pi_k[H_N]) \leq d_{T1,N}(F_N, H_N). \] \label{controlmarginals}
\end{lemma}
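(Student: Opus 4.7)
The plan is to reduce all three inequalities to one elementary observation: for any probability density $F_N$ on $\mathbb{R}^N$, the Fourier transform of its $k$-th marginal satisfies
\[ \widehat{\Pi_k[F_N]}(\eta) = \widehat{F_N}(\bar{\eta}), \qquad \bar{\eta}:=(\eta_1,\dots,\eta_k,0,\dots,0) \in \mathbb{R}^N, \]
which follows immediately from Fubini, since integrating out the spatial variables $v_{k+1},\dots,v_N$ is dual to setting the corresponding frequencies to zero. This identity converts each marginalization bound into a comparison of a supremum over a slice with a supremum over the whole frequency space.

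First I would handle $d_{GTW,k}$. Because $|\bar{\eta}|_{\mathbb{R}^N}=|\eta|_{\mathbb{R}^k}$, the ratio
\[ \frac{|\widehat{\Pi_k F_N}(\eta)-\widehat{\Pi_k H_N}(\eta)|}{|\eta|^2} = \frac{|\widehat{F_N}(\bar{\eta})-\widehat{H_N}(\bar{\eta})|}{|\bar{\eta}|^2} \]
is literally the $N$-dimensional GTW ratio restricted to the subspace $\{\xi_{k+1}=\dots=\xi_N=0\}$. Taking $\sup_{\eta\neq 0}$ on the left is thus the supremum of a restriction of the $N$-dimensional ratio, which is bounded by the unrestricted supremum $d_{GTW,N}(F_N,H_N)$. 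Replacing $|\cdot|^2$ by $|\cdot|$ everywhere gives the $d_{T1}$ inequality verbatim.

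For $\tilde{d}$ a little more care is needed, since the correction involves the auxiliary cutoff $\chi$. I would choose $\chi$ to be radial (or otherwise compatible with the embedding $\mathbb{R}^k\hookrightarrow\mathbb{R}^N$ that appends zeros) so that $\chi^{(N)}(\bar{\eta})=\chi^{(k)}(\eta)$. Under this choice, and using that by symmetry of $F_N$ and $H_N$ the first moments in the $j$-th variable agree with those of the $k$-marginal for every $j\leq k$, the correction collapses on the slice:
\[ \widehat{\mathcal{M}^{(N)}[F_N-H_N]}(\bar{\eta}) = \widehat{\mathcal{M}^{(k)}[\Pi_k F_N-\Pi_k H_N]}(\eta), \]
because the terms with $j>k$ are multiplied by $\bar{\eta}_j=0$ and drop out. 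The same reduction holds for $\widehat{W}$ once one checks that $I-Q$ and $I-R_j$ respect the tensor-product structure on symmetric functions, so that their $j>k$ contributions either vanish on the slice or reduce to their $k$-particle analogues. This gives $\widehat{D_N}(\bar{\eta})=\widehat{D_k}(\eta)$ and $\widehat{W^{(N)}}(\bar{\eta})=\widehat{W^{(k)}}(\eta)$, after which the same slice-restriction of supremum argument finishes the job.

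The main obstacle is not analytical but bookkeeping: one has to set up $\chi$ and the moment correction in a dimension-compatible way so that the operators restrict cleanly under $\bar{\eta}$. For $d_{GTW}$ and $d_{T1}$ no such subtlety arises, and the proof is essentially a one-line consequence of the Fubini identity together with the equality of Euclidean norms under zero-padding.
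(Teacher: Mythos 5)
Your proposal is correct and follows essentially the same route as the paper: the identity $\widehat{\Pi_k[F_N]}(\eta)=\widehat{F_N}(\eta_1,\dots,\eta_k,0,\dots,0)$ turns each marginal distance into the supremum of the $N$-dimensional ratio restricted to the slice $\{\xi_{k+1}=\dots=\xi_N=0\}$, which is bounded by the unrestricted supremum. Your additional bookkeeping for $\tilde{d}$ (choosing $\chi$ compatibly across dimensions so the correction term restricts cleanly) is in fact more careful than the paper, which simply asserts the proof is identical for all three distances.
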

\begin{proof} The proof is the same for all the distances so we only do it in the case of $GTW$.
We can notice that
\[ \widehat{\Pi_k[F_N]}(\xi_1,\dots,\xi_k) = \widehat{F_N}(\xi_1,\dots,\xi_k,0,\dots,0). \] Using this we have that
\begin{align*}
d_{GTW,k}(\Pi_k[F_N], \Pi_k[H_N]) =& \sup_{\xi \neq 0, \xi_{k+1}=\dots=\xi_N=0} \frac{|\widehat{F_N}(\xi)-\widehat{H_N}(\xi)|}{|\xi|^2}\\
\leq & \tilde{d}(F_N,H_N).
\end{align*}
\end{proof}
\begin{lemma} If $f, h$ have the same first moments
\[ d_{GTW,N}(f^{\otimes N}, h^{\otimes N}) = d_{GTW,1}(f,h) \] where $d_{GTW,k}$ is the GTW distance on probability densities with $k$-variables. \label{tensorisedfunctions}
\end{lemma}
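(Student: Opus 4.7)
The plan is to prove the equality by establishing the two inequalities separately. The lower bound $d_{GTW,1}(f,h) \leq d_{GTW,N}(f^{\otimes N}, h^{\otimes N})$ is essentially immediate from the previous result: Lemma \ref{controlmarginals} applied with $k=1$ yields $d_{GTW,1}(\Pi_1[f^{\otimes N}], \Pi_1[h^{\otimes N}]) \leq d_{GTW,N}(f^{\otimes N}, h^{\otimes N})$, and since the first marginal of a tensor product is the underlying factor, this reduces to the desired inequality.

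For the upper bound, the key input is that the Fourier transform of a tensor product factorises, giving $\widehat{f^{\otimes N}}(\xi) = \prod_{j=1}^N \hat f(\xi_j)$ and similarly for $h$. I would then apply the standard telescoping identity
\[
\prod_{j=1}^N \hat f(\xi_j) - \prod_{j=1}^N \hat h(\xi_j) = \sum_{k=1}^N \Bigl(\prod_{j<k} \hat f(\xi_j)\Bigr)\bigl(\hat f(\xi_k) - \hat h(\xi_k)\bigr)\Bigl(\prod_{j>k} \hat h(\xi_j)\Bigr),
\]
and combine this with the bound $|\hat f|, |\hat h| \leq 1$ (valid for probability densities) to drop the external product factors, obtaining
\[
|\widehat{f^{\otimes N}}(\xi) - \widehat{h^{\otimes N}}(\xi)| \leq \sum_{k=1}^N |\hat f(\xi_k) - \hat h(\xi_k)|.
\]
Each term in the sum can then be bounded using the definition of $d_{GTW,1}$ by $|\hat f(\xi_k) - \hat h(\xi_k)| \leq d_{GTW,1}(f,h)\, \xi_k^2$, where the estimate extends continuously across $\xi_k=0$ thanks to matching zeroth and first moments. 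Summing over $k$, using $\sum_k \xi_k^2 = |\xi|^2$, dividing by $|\xi|^2$ and taking the supremum over $\xi \neq 0$ gives $d_{GTW,N}(f^{\otimes N}, h^{\otimes N}) \leq d_{GTW,1}(f,h)$.

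I do not anticipate a serious obstacle; the argument is routine once the telescoping identity is in place. The only slightly delicate point is confirming that the equal first moment hypothesis is used in two compatible ways: it guarantees finiteness of $d_{GTW,1}(f,h)$, and it simultaneously ensures that the pointwise estimate $|\hat f(\xi_k) - \hat h(\xi_k)| \leq d_{GTW,1}(f,h)\xi_k^2$ is meaningful at $\xi_k = 0$, where both sides vanish. Without the same first moments neither side of the claimed identity would be finite, so the assumption is natural.
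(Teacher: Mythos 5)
Your proposal is correct and follows essentially the same route as the paper: the telescoping factorisation of $\widehat{f^{\otimes N}}-\widehat{h^{\otimes N}}$ combined with $|\hat f|,|\hat h|\leq 1$ and $\sum_k \xi_k^2=|\xi|^2$ for the upper bound, and the marginal-control lemma for the lower bound. Your explicit remark that terms with $\xi_k=0$ vanish because the zeroth and first moments match is a point the paper glosses over, but otherwise the two arguments coincide.
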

\begin{proof}
\begin{align*}
d_{GTW}(f^{\otimes N}, &h^{\otimes N}) = \sup_{\xi \neq 0} \frac{|\hat{f}(\xi_1)\dots\hat{f}(\xi_N)-\hat{h}(\xi_1)\dots\hat{h}(\xi_N)|}{|\xi|^2}\\
\leq & \sup_{\xi \neq 0} \frac{\sum_{i=1}^N|\hat{f}(\xi_1)\dots\hat{f}(\xi_{i-1})(\hat{f}(\xi_i)-\hat{h}(\xi_i))\hat{h}(\xi_{i+1})\dots\hat{h}(\xi_N)|}{|\xi|^2}\\
\leq & \sup_{\xi \neq 0} \sum_{i=1}^N \frac{\hat{f}(\xi_i)-\hat{h}(\xi_i)}{\xi_i^2} \frac{\xi_i^2}{|\xi|^2} \\
\leq & \sup_{\xi \neq 0} \sum_{i=1}^N d_{GTW,1}(f,h)\frac{\xi_i^2}{|\xi|^2} = d_{GTW,1}(f,h).
\end{align*} Since $f,h$ are the first marginals of $f^{\otimes N}, h^{\otimes N}$ respectively we have by the earlier lemma that
\[ d_{GTW,1}(f,h) \leq d_{GTW,N}(f^{\otimes N}, h^{\otimes N}) \] putting the two inequalities together gives the required result.
\end{proof}
We have already seen that 
\[ \frac{\widehat{W}(\xi)}{|\xi|^2}, \] may increase with $N$ so this will cause us problems if we wished to try and control $\tilde{d}_N(f^{\otimes N}, h^{\otimes N})$ by $\tilde{d}_1(f,h)$ . Even given this it would be good to be able to push the control by first marginals to general functions. However, the next lemma shows that this is not possible.
\begin{lemma}
There exist $f,g$ with finite second moment such that $f,g$ are symmetric and mean zero and they have the same marginals but $f,g$ are not the same. This means we cannot control the GTW distance between $f$ and $g$ in terms of the GTW distance between their first marginals. \label{restriction}
\end{lemma}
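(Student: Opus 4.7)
The plan is to exhibit an explicit counterexample: two distinct symmetric mean-zero probability densities $f, g$ on $\mathbb{R}^N$ with finite second moment and identical one-dimensional marginals. Once such a pair is in hand, $d_{GTW,1}(\Pi_1[f],\Pi_1[g]) = 0$ whereas $f \neq g$ forces $\hat f \neq \hat g$ on a non-empty open set, so $d_{GTW,N}(f,g) > 0$. This immediately rules out any upper bound of the joint GTW distance by a function of the marginal GTW distance that vanishes at $0$, which is the content of the lemma.

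The simplest realisation is $N=2$ using two centred bivariate Gaussians. Take $f$ to be the density of $\mathcal{N}(0, I_2)$ and $g$ the density of $\mathcal{N}(0, \Sigma_\rho)$ with $\Sigma_\rho = \begin{pmatrix} 1 & \rho \\ \rho & 1 \end{pmatrix}$ for some fixed $\rho \in (0,1)$. Both densities are invariant under exchange of coordinates, have zero mean and finite moments of every order, and share the one-dimensional marginal $\mathcal{N}(0,1)$; yet they differ as joint measures, since one factorises and the other carries a non-trivial correlation.

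To finish I would verify that $d_{GTW,2}(f,g)$ is finite and strictly positive. Writing $\hat f(\xi) = e^{-|\xi|^2/2}$ and $\hat g(\xi) = e^{-(|\xi|^2 + 2\rho \xi_1 \xi_2)/2}$, a Taylor expansion at the origin gives $\hat f(\xi) - \hat g(\xi) = -\rho\, \xi_1 \xi_2 + O(|\xi|^4)$, which keeps $|\hat f - \hat g|/|\xi|^2$ bounded near zero, while decay at infinity follows from $|\hat f|, |\hat g| \leq 1$; strict positivity of the supremum follows from a single evaluation such as $\hat f(1,1) - \hat g(1,1) = e^{-1} - e^{-1-\rho} > 0$.

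There is no substantial obstacle in this argument; the only mild care needed is to ensure that the chosen pair actually lies in the class on which $d_{GTW}$ is defined (symmetric, same first moment, finite second moment), and centred Gaussians with equal variances but different covariance structures make this automatic. The conceptual point is that the GTW distance, unlike entropy, is genuinely sensitive to correlations between coordinates, so control of the first-marginal GTW distance is strictly weaker than control of the full joint GTW distance.
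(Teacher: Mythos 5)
Your proposal is correct and proves the lemma; it differs from the paper only in the choice of counterexample. The paper takes a non-even, mean-zero density $\phi$ on $\mathbb{R}$ and sets $f=\tfrac12\bigl(\phi(v_1)\phi(-v_2)+\phi(-v_1)\phi(v_2)\bigr)$, $g=\tfrac12\bigl(\phi(v_1)\phi(v_2)+\phi(-v_1)\phi(-v_2)\bigr)$, a mixture-of-products construction whose common marginal is $\tfrac12(\phi(\cdot)+\phi(-\cdot))$; you instead take the standard and the $\rho$-correlated centred bivariate Gaussians. Both pairs are exchangeable, mean zero, with finite second moments, identical one-dimensional marginals, and distinct joint laws, which is all the lemma needs. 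Your version has the advantage that the Fourier transforms are explicit, so the finiteness and strict positivity of $d_{GTW,2}(f,g)$ — which the paper leaves as ``easy to see'' — are verified in one line; the paper's version shows the obstruction is not tied to Gaussianity or to correlation in the covariance sense (its two laws even share all second moments when $\phi$ does). One cosmetic slip: since $1-e^{-\rho\xi_1\xi_2}=\rho\xi_1\xi_2+O(|\xi|^4)$, the leading term of $\hat f-\hat g$ is $+\rho\,\xi_1\xi_2\,e^{-|\xi|^2/2}$ rather than $-\rho\,\xi_1\xi_2$; this does not affect the boundedness or positivity conclusions.
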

\begin{proof}
Let $\phi$ be a density function on $\mathbb{R}$ which is mean zero but not even. Define
\[ f(v_1, v_2) := \frac{1}{2} (\phi(v_1)\phi(-v_2)+\phi(-v_1)\phi(v_2)), \] and \[ g(v_1,v_2)= \frac{1}{2}(\phi(v_1)\phi(v_2) + \phi(-v_1)\phi(-v_2)). \] Then it is easy to see that $f$ and $g$ have the required properties.
\end{proof}
We wish to combine these lemmas in such a way as to get uniform control on the first marginal. Given the restriction shown by Lemma \ref{restriction} we want to choose `good' initial data in order that the distance between the initial data is controlled by the distance between the first marginals.
\begin{proof}[Proof of Theorem \ref{chaos}]
Since $f,h$ have mean zero and the GTW distance between $F_N(0)$ and $f^{\otimes N}$ is finite, we have that $F_N$ and $H_N$ have zero mean initially. By \ref{firstmoments} this holds for all time. Therefore we have by Lemma \ref{controlmarginals}
\[ d_{GTW,1}(\Pi_1[F_N], \Pi_1[H_N]) \leq d_{GTW,N}(F_N, H_N). \] Furthermore, by Theorem \ref{convergence}
\[ d_{GTW,N}(F_N(t), H_N(t)) \leq d_{GTW,N}(F_N(0), H_N(0))e^{-\frac{\mu}{2}t}. \] Now we use the chaoticity property and our control on tensorised functions form Lemma \ref{tensorisedfunctions} to get
\begin{align*}
d_{GTW,N}(F_N(0), H_N(0)) & \leq d_{GTW,N}(F_N(0), f^{\otimes N}) + d_{GTW, N}(f^{\otimes N},h^{\otimes N})\\&+ d_{GTW,N}(h^{\otimes N}, H_N(0)) \\
& = C_1 + d_{GTW,1}(f,h).
\end{align*} Here $C_1$ only depends on how close the initial data is to tensorised. Putting this together gives
\[ d_{GTW,1}(\Pi_1[F_N](t), \Pi_1[H_N](t)) \leq (d_{GTW,1}(f,h) + C_1) e^{-\frac{\mu}{2}t}. \] We do not have from our conditions that $C_1$ will decrease to 0 as $N \rightarrow \infty$, but since in this situation the real interest is just to choose any $f$-chaotic family we may as well have that $F_N(0) = f^{\otimes N}$ and similarly with $H$ which would dispense with the $C_1$ altogether.
\end{proof}

Now we would like to prove a theorem in the spirit of Theorem \ref{chaos} when we do not have $f$ and $h$ having zero mean initially. We cannot recover uniform estimates in $N$ but we can control the growth with $N$. We have from lemma \ref{controlmarginals} control of marginals by the function for the $\tilde{d}$ distance so we have
\[ \tilde{d}_k(\Pi_k[F_N], \Pi_k[H_N]) \leq \tilde{d} (F_N, H_N). \] Following this we would like to prove something in the spirit of lemma \ref{tensorisedfunctions} in order to control in the other direction.
\begin{lemma} Suppose we have $f$ and $h$ probability distributions on $\mathbb{R}$ with differentiable Fourier transforms.
If we define 
\[ n_f = \int |v|f(v) \mathrm{d}v, \] and let $M= \max \left\{ \frac{n_f}{|m_f|}, \frac{n_h}{|m_h|} \right\}$ then we have the following control by the first marginals for the $\tilde{d}$ distance on tensorised functions.
\[ \tilde{d}_N(f^{\otimes N}, h^{\otimes N}) \leq \tilde{d}_1(f,h) + M|m_f-m_h| \sqrt{N}. \]
\end{lemma}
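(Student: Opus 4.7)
The plan is to combine the telescoping identity for tensor products of characteristic functions with the shift sending $f,h$ to their mean-zero recenterings $f_0(v) = f(v+m_f)$, $h_0(v) = h(v+m_h)$. Writing $\hat f(\xi) = e^{-im_f\xi}\widehat{f_0}(\xi)$ and similarly for $h$, one obtains $\widehat{F_N}(\xi) = e^{-im_f S}P_f(\xi)$ with $S = \sum_k \xi_k$ and $P_f(\xi) = \prod_i \widehat{f_0}(\xi_i)$ (analogously $\widehat{H_N} = e^{-im_h S}P_h$). With the correction satisfying $\widehat{\mathcal{M}[F_N - H_N]}(\xi) = -i\chi(\xi)\tau S$, where $\tau = m_f - m_h$ (the sign that cancels the linear part of $\widehat{F_N}-\widehat{H_N}$ at $\xi=0$), I would decompose
\[\widehat{D_N}(\xi) = e^{-im_f S}\bigl(P_f(\xi)-P_h(\xi)\bigr) + \bigl[(e^{-im_f S}-e^{-im_h S})P_h(\xi) + i\chi(\xi)\tau S\bigr].\]

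Since $f_0$ and $h_0$ share mean $0$, Lemma \ref{tensorisedfunctions} applied to the pair $(f_0,h_0)$ controls the first summand by $d_{GTW,1}(f_0,h_0) \le \tilde d_1(f,h)$. For the second summand, I would use the integral identity $e^{-im_f S} - e^{-im_h S} = -iS\tau\int_0^1 e^{-i(m_h+s\tau)S}\,ds$ to rewrite it (in the region $\chi = 1$) as $i\tau S\bigl[1 - P_h(\xi)\int_0^1 e^{-i(m_h+s\tau)S}\,ds\bigr]$. Bounding the bracket by $|1-P_h| + |\int_0^1(e^{-i\alpha_s S}-1)\,ds| \le K_{h_0}|\xi|^2/2 + \max(|m_f|,|m_h|)|S|$, and then using the outer factor $|\tau S|$ together with the Cauchy--Schwarz estimate $|S| \le \sqrt N|\xi|$, produces a bound of order $|\tau|\max(|m_f|,|m_h|)\sqrt N|\xi|^2$; this is absorbed into $M|m_f-m_h|\sqrt N|\xi|^2$ via the definition $M = \max\{n_f/|m_f|, n_h/|m_h|\} \ge 1$ together with $n_f \ge |m_f|$.

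The contribution of $\widehat{W}$ to $\tilde d_N$ was already estimated during the proof of Theorem \ref{convergence}: for tensorised initial data the explicit calculation gives $\widehat{W}(\xi)/|\xi|^2 \propto i\tau S$ near the origin, which together with $|S| \le \sqrt N|\xi|$ and the compact support of $w$ yields $\sup_\xi|\widehat{W}(\xi)|/|\xi|^2 \le C|\tau|\sqrt N$. Summing the $\widehat{D_N}$ and $\widehat{W}$ bounds gives the claimed inequality. The principal delicate point is obtaining the sharp $\sqrt N$ scaling in the mean-shift residue: a direct Taylor expansion of the bracket produces a term of the form $\tau(m_f+m_h)S^2/2$ that scales as $N|\xi|^2$ along the diagonal $\xi \propto (1,\ldots,1)$, and retaining the single-$S$ structure of the residue while applying $|S| \le \sqrt N|\xi|$ only once is what makes the $\sqrt N$ improvement possible.
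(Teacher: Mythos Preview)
Your approach via recentering to the mean-zero shifts $f_0,h_0$ is genuinely different from the paper's, but it does not deliver the $\sqrt N$ scaling and the gap is precisely in the step you flag as delicate. After your decomposition, the second summand is (where $\chi=1$)
\[
i\tau S\Bigl[1-P_h(\xi)\int_0^1 e^{-i(m_h+s\tau)S}\,ds\Bigr],
\]
and your bound on the bracket is $|1-P_h|+\bigl|\int_0^1(1-e^{-i\alpha_s S})\,ds\bigr|\le \tfrac{K_{h_0}}{2}|\xi|^2+\max(|m_f|,|m_h|)\,|S|$. Multiplying by the outer factor $|\tau S|$ therefore produces a term $|\tau|\max(|m_f|,|m_h|)\,S^2$, and along the diagonal $\xi=(1,\dots,1)$ one has $S^2/|\xi|^2=N$, not $\sqrt N$. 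You cannot ``apply $|S|\le\sqrt N\,|\xi|$ only once'' here: the bracket already carries its own factor of $|S|$, so the product unavoidably contains $S^2$. Your last paragraph correctly identifies the danger and then falls into it. A secondary issue: the inequality $d_{GTW,1}(f_0,h_0)\le\tilde d_1(f,h)$ is asserted without justification; since $\widehat{f_0}-\widehat{h_0}=e^{im_f\xi}\hat f-e^{im_h\xi}\hat h$ is not $\hat f-\hat h-\chi(m_f-m_h)i\xi$, this needs an argument (and is in general false as stated).

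The paper avoids the $S^2$ trap by a completely different mechanism. It telescopes directly (no recentering) so that the residual term is
\[
(m_f-m_h)\,i\sum_k\xi_k\Bigl(\hat f(\xi_1)\cdots\hat f(\xi_{k-1})\hat h(\xi_{k+1})\cdots\hat h(\xi_N)-1\Bigr),
\]
and then controls, for each $k$, the ratio
\[
A=\frac{\hat f(\xi_1)\cdots\hat f(\xi_{k-1})\hat h(\xi_{k+1})\cdots\hat h(\xi_N)-1}{m_f\sum_{j<k}i\xi_j+m_h\sum_{j>k}i\xi_j}
\]
by a constant $M$ independent of $N$. The bound $|A|\le M$ is obtained by examining stationary points: differentiating in $\xi_l$ and substituting back shows that at any critical point $A$ equals $(im_f)^{-1}\hat f'(\xi_l)\times(\text{product of characteristic functions})$, whence $|A|\le n_f/|m_f|\le M$ because $|\hat f'|\le n_f$. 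This is where the hypotheses $n_f,n_h$ and the definition of $M$ actually enter, and it is the device that replaces one of your two factors of $S$ by a dimension-free constant.
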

\begin{proof}
Using the same bridging argument as before we see that
\begin{align*} &\hat{f}(\xi_1)\dots\hat{f}(\xi_N) - \hat{h}(\xi_1)\dots\hat{h}(\xi_N) - (m_f-m_h)\chi_N(\xi)\sum_k i \xi_k\\& = 
\sum_k \hat{f}(\xi_1)\dots\hat{f}(\xi_{k-1})(\hat{f}(\xi_k)-\hat{h}(\xi_k)-\chi_1(\xi_k)(m_f-m_h)i\xi_k)\hat{h}(\xi_{k+1})\dots\hat{h}(\xi_N) \\
&+ \sum_k \hat{f}(\xi_1)\dots\hat{f}(\xi_{k-1})(m_f-m_h)\chi_1(\xi_k)i\xi_k \hat{h}(\xi_{k+1})\dots\hat{h}(\xi_N) \\& - \chi_N(\xi)\sum_k (m_f-m_h)i\xi_k. \end{align*}
In order to complete the proof we want to bound the last term by something of the form
\[ M|m_f-m_h|\sqrt{N}|\xi|^2.\] Provided the radius of the set in which the $\chi$ are 1 is sufficiently large this will be true. So if we look at the last term where the $\chi$ are 1, we have
\[ (m_f-m_h)i \sum_k \xi_k \left(\hat{f}(\xi_1)\dots\hat{f}(\xi_{k-1})\hat{h}(\xi_{k+1})\dots\hat{h}(\xi_k)-1  \right). \] If instead we try and bound
\[ A=\frac{ \hat{f}(\xi_1)\dots \hat{f}(\xi_{k-1})\hat{h}(\xi_{k+1})\dots\hat{h}(\xi_N)-1 }{m_f \sum_{j<k} i \xi_j + m_h \sum_{k<j}i \xi_j} \leq M \] then we would have the bound
\begin{align*} &\left|\frac{\sum_k (\hat{f}(\xi_1)\dots\hat{f}(\xi_{k-1})\hat{h}(\xi_{k+1})\dots \hat{h}(\xi_N) -1)\xi_k (m_f-m_h)}{|\xi|^2}  \right| \\
& \leq M \frac{\left|\sum_{k=1}^N(m_f\sum_{j<k}i\xi_j + m_h \sum_{k<j}i\xi_j)\xi_k i (m_f-m_h) \right|}{|\xi|^2}  \leq M|m_f-m_h|\sqrt{N}.\end{align*} Therefore it remains to prove the bound on $A$, we do this first by noting that by Taylor expanding we can see that as $|\xi| \rightarrow 0, A \rightarrow 1$ and that as $|\xi| \rightarrow \infty, A \rightarrow 0$. A is differentiable everywhere except possibly $0$. Now we differentiate to get that at any stationary point of $A$ and for every $l<k$ we have 
\begin{align*} & \hat{f}(\xi_1)\dots\hat{f}'(\xi_l)\dots\hat{f}(\xi_{k-1})\hat{h}(\xi_{k+1})\dots\hat{h}(\xi_N) \left(m_f \sum_{j<k} i \xi_j + m_h \sum_{k<j} i \xi_j \right) \\ &= i m_f \left( \hat{f}(\xi_1) \dots \hat{f}(\xi_{k-1})\hat{h}(\xi_{k+1})\dots \hat{h}(\xi_N) -1  \right).  \end{align*} Substituting this into our expression for $A$ shows that at a stationary point
\[ A= \frac{1}{i m_f} \hat{f}(\xi_1)\dots\hat{f}'(\xi_l)\dots\hat{f}(\xi_{k-1})\hat{h}(\xi_{k+1})\dots \hat{h}(\xi_N) \leq M. \] This gives the claimed bound. It seems like there will be a problem if $m_f=0$ but if so we can always choose to differentiate in a direction so that we will get $m_h$ rather than $m_f$ and the cannot both be $0$. Here $C_1$, in the statement, only depends on the distance between the initial data and the tensorised functions, $C_2$ only depends on $g$ and $\chi$ and $C_3$ is a constant times $M|m_f-m_h|$ where $M$ is the maximum of $\int|v|f(v)\mathrm{d}v$ with the same quantity for $h$.\end{proof}
We can now prove the theorem
\begin{proof}[Proof of Theorem \ref{convwithcorrectionterm}]
This is found by putting together the convergence theorems and lemmas on distance control in exactly the same way as Theorem \ref{firstmoments}.
\end{proof}

If we move on to looking at the $T1$ distance we again have the bound on the $T1$ distance between marginals by the distance between the full function from Lemma \ref{controlmarginals}. We would like to be able to control the distance between tensorised functions by the marginals in order to give similar arguments to Theorem \ref{chaos} and Theorem \ref{convwithcorrectionterm}.

\begin{lemma}
\[ d_{T1,N}(f^{\otimes N}, h^{\otimes N}) \leq \sqrt{N} d_{T1,1}(f,h). \] Furthermore, the square root dependence is the best possible if $f,h$ have different means.
\end{lemma}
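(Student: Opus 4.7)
The plan is to mimic the bridging/telescoping argument used in Lemma \ref{tensorisedfunctions} for the $GTW$ distance, but replace the estimate on $|\xi_i|^2/|\xi|^2$ (which summed to $1$) by an estimate on $|\xi_i|/|\xi|$ (which will only sum to $\sqrt N$, giving the claimed loss). Concretely, I would write
\[ \hat f(\xi_1)\cdots \hat f(\xi_N) - \hat h(\xi_1)\cdots\hat h(\xi_N) = \sum_{i=1}^N \hat f(\xi_1)\cdots \hat f(\xi_{i-1})\bigl(\hat f(\xi_i)-\hat h(\xi_i)\bigr)\hat h(\xi_{i+1})\cdots\hat h(\xi_N), \]
use $|\hat f|,|\hat h|\le 1$, divide by $|\xi|$, and insert $|\xi_i|/|\xi_i|$ in each term to obtain
\[ \frac{|\hat f^{\otimes N}(\xi)-\hat h^{\otimes N}(\xi)|}{|\xi|}\le d_{T1,1}(f,h)\sum_{i=1}^N \frac{|\xi_i|}{|\xi|}. \]

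The sum is then controlled by Cauchy--Schwarz: $\sum_i |\xi_i| \le \sqrt N\,(\sum_i |\xi_i|^2)^{1/2} = \sqrt N\,|\xi|$, so the supremum over $\xi$ gives the bound $\sqrt N\, d_{T1,1}(f,h)$. This step is straightforward; the only mild subtlety is that $|\hat f(\xi_i)-\hat h(\xi_i)|/|\xi_i|\le d_{T1,1}(f,h)$ uniformly in $\xi_i$, which is immediate from the definition.

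For sharpness when $m_f\ne m_h$, I would test the bound on the diagonal direction $\xi=(\eta,\eta,\ldots,\eta)\in\mathbb R^N$ with $\eta\to 0$. Since $f,h$ have finite mean, $\hat f(\eta) = 1 - i m_f\eta + o(\eta)$ and similarly for $\hat h$, so
\[ \hat f(\eta)^N - \hat h(\eta)^N = -iN(m_f-m_h)\eta + o(\eta), \qquad |\xi| = \sqrt N\,|\eta|. \]
Taking $\eta\to 0$ yields
\[ d_{T1,N}(f^{\otimes N},h^{\otimes N}) \ge \sqrt N\,|m_f-m_h|. \]
On the other hand, the one-variable $T1$ distance satisfies $d_{T1,1}(f,h)=|m_f-m_h|+o(1)$ near zero; more precisely $d_{T1,1}(f,h)\le \|f-h\|_{L^1}\cdot (\text{const})$ need not hold, but the lower bound $d_{T1,1}(f,h)\ge |m_f-m_h|$ holds by the same one-variable calculation. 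Taking $f,h$ to be close-by distributions with only their means differing (e.g.\ translates of a fixed mean-zero density), the ratio $d_{T1,N}(f^{\otimes N},h^{\otimes N})/d_{T1,1}(f,h)$ is at least $\sqrt N$, establishing optimality.

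The main (very minor) obstacle is just getting the lower bound matched to the upper bound cleanly: one needs $d_{T1,1}(f,h)$ to behave like $|m_f-m_h|$ up to constants for the chosen pair, which is why one picks $f$ and $h$ so that the mean-difference dominates their $T1$ distance (translates of a common density are the clean choice). Everything else is just an application of Cauchy--Schwarz and Taylor expansion of the characteristic functions.
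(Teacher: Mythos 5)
Your proof is correct and follows essentially the same route as the paper: the telescoping decomposition of $\hat f^{\otimes N}-\hat h^{\otimes N}$, the bound $|\hat f|,|\hat h|\le 1$, and Cauchy--Schwarz giving $\sum_k|\xi_k|\le\sqrt N|\xi|$, with sharpness obtained by Taylor expanding along the diagonal direction $\xi=(\eta,\dots,\eta)$. Your sharpness discussion is in fact slightly more complete than the paper's, since you also note (via choosing $h$ a translate of $f$) that $d_{T1,1}(f,h)$ is then comparable to $|m_f-m_h|$, which is needed to conclude that the ratio of the two distances genuinely attains $\sqrt N$.
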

\begin{proof}
This follows a similar argument to the others
\begin{align*}
\sup_{\xi \neq 0}& \frac{|\hat{f}(\xi_1)\dots\hat{f}(\xi_N)-\hat{h}(\xi_1)\dots\hat{h}(\xi_N)|}{|\xi|}\\& \leq \sup_{\xi \neq 0} \frac{\sum_k |\hat{f}(\xi_1)\dots \hat{f}(\xi_{k-1}(\hat{f}(\xi_k)-\hat{h}(\xi_k))\hat{h}(\xi_{k+1})\dots\hat{h}(\xi_N)|}{|\xi|}\\
& \leq \sup_{\xi \neq 0} \sum_k \frac{|\hat{f}(\xi_k)-\hat{h}(\xi_k)|}{|\xi_k|} \frac{|\xi_k|}{|\xi|}\\
& \leq sup_{\xi \neq 0} \frac{|\hat{f}(\xi)-\hat{h}(\xi)|}{|\xi|}\sum_k \frac{|\xi_k|}{|\xi|} \\
& \leq \sqrt{N}\sup_{\xi \neq 0} \frac{|\hat{f}(\xi)-\hat{h}(\xi)|}{|\xi|}.
\end{align*}
The fact that the square root dependence is necessary for functions with different means can be seen by Taylor expanding 
\[ \frac{\hat{f}(\xi_1)\dots\hat{f}(\xi_N)-\hat{h}(\xi_1)\dots\hat{h}(\xi_N)}{|\xi|} \] around $\xi = 0$ then we can see that the limit as $\xi \rightarrow 0$ of this expression has modulus $\sqrt{N}|m_f-m_h|$.
\end{proof}
\begin{proof}[Proof of Theorem \ref{t1distfirstmarginal}]
Again we combine the convergence theorem that we have for the $T1$ distance with the control on distances as in Theorem \ref{firstmoments}.
\end{proof}

\section{Contraction in Wasserstein-2}
We can also show contraction of this model in Wasserstein distances using a simple coupling of two different systems. This coupling involves taking two of the coupled Kac's models and giving them simultaneous collisions with the same angle if it is an internal collision and the same angle and velocity of the external particle if it is an external collision. We can represent the stochastic process as an integral against several Poisson point processes. This is done in \cite{H16} and is helpful here to prove contraction for the energy process in Kac's model.
\begin{align}
V_{i,t} =& V_{i,0} + \lambda \sum_{j \neq i} \int_0^t \int_0^{2\pi} \left(V_{i,s^-}\cos \theta + V_{j, s^-}\sin \theta - V_{i, s^-}\right)\Pi_{i,j}(\mathrm{d}s, \mathrm{d}\theta) \\ &+ 2 \mu \int_0^t \int_{-\infty}^{\infty} \int_0^{2\pi}\left(V_{i,s^-} \cos \theta + w \sin \theta - V_{i,s^-}\right)\nu_i (\mathrm{d}s, \mathrm{d}w, \mathrm{d}\theta).
\end{align} Here $\Pi_{i,j}$ is a Poisson point process on $[0, \infty) \times [0, 2\pi]$ with intensity measure being $1/2\pi(N-1)$ times Lebesgue measure, and $\nu_i$ is a Poisson point process with intensity measure $g$ tensored with $1/2\pi(N-1)$ times Lebesgue. Using this representation we can prove contraction in Wasserstein-2.

\begin{proof}[Proof of Theorem \ref{wassersteincontraction}]
Using the representation above we can write out a similar formula for the difference between two solutions coupled by giving them the same driving Poisson processes. If we call this difference in the $i^{th}$ variable $\Delta_{i,t}$ then we can write
\begin{align*}
\Delta_{i,t}^2 =& \Delta_{i,0}^2 +\\&\lambda \sum_{j \neq i} \int_0^t \int_0^{2\pi} \left( \Delta_{i,s^-}^2 (\cos^2 \theta - 1) + \Delta_{j,s^-}^2 \sin^2 \theta +\right. \\& \left. 2\cos \theta \sin \theta \Delta_{i,s^-}\Delta_{j,s^-} \right)\Pi_{i,j}(\mathrm{d}s, \mathrm{d}\theta) \\ &+ 2 \mu \int_0^t \int_{-\infty}^{\infty} \int_0^{2\pi} \left( \Delta_{i,s^-}^2(cos^2 \theta - 1) + 2\Delta_{i,s^-} w \sin\theta \cos \theta \right) \nu(\mathrm{d}s, \mathrm{d}w, \mathrm{d}\theta).
\end{align*} Summing over $i$ and taking expectations gives
\begin{align*}
\frac{\mathrm{d}}{\mathrm{d}t} \mathbb{E}\left( \sum_{i=1}^n \Delta_{i,t}^2 \right) =& 2 \lambda (N-1) \frac{1}{2\pi} \int_0^{2\pi} (\cos^2 \theta + \sin^2 \theta -1) \mathrm{d}\theta  \mathbb{E}\left( \sum_{i=1}^n \Delta_{i,t}^2 \right) \\ &+ 2\mu \frac{1}{2\pi}\int_0^{2\pi} \int_{-\infty}^{\infty} g(w) (\cos^2 \theta - 1) \mathrm{d}\theta \mathrm{d}w  \mathbb{E}\left( \sum_{i=1}^n \Delta_{i,t}^2 \right)\\ =& -\mu  \mathbb{E}\left( \sum_{i=1}^n \Delta_{i,t}^2 \right).
\end{align*} Which gives the result after taking the infimum over possible couplings.
\end{proof}
We can also prove a similar controls over how Wasserstein distances behave in as the dimension goes to infinity. Here we write $\mathcal{W}_{p, d}$ to be the Wasserstein-2 distance related to the euclidean distance on $\mathbb{R}^d$. 
\begin{lemma}
If $\mu, \nu$ are measures on $\mathbb{R}$ with finite second moment then
\[ \mathcal{W}_{2,N}(\mu^{\otimes N}, \nu^{\otimes N}) = \sqrt{N}\mathcal{W}_{2,1}(\mu, \nu). \]
\end{lemma}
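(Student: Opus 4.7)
The plan is to prove equality by establishing matching upper and lower bounds, exploiting the fact that the squared Euclidean norm on $\mathbb{R}^N$ decomposes as a sum over coordinates and that the tensor product of couplings is again a coupling.

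For the upper bound, I would start with an optimal coupling $\pi^\star$ of $\mu$ and $\nu$ in $\mathcal{W}_{2,1}$ (which exists, since finite second moment guarantees the infimum is attained). The tensor product $(\pi^\star)^{\otimes N}$ is a probability measure on $\mathbb{R}^N \times \mathbb{R}^N$ whose marginals are $\mu^{\otimes N}$ and $\nu^{\otimes N}$, so it is an admissible coupling. Under it, one computes
\[ \int_{\mathbb{R}^{2N}} \|\mathbf{x}-\mathbf{y}\|^2 \, d(\pi^\star)^{\otimes N}(\mathbf{x},\mathbf{y}) = \sum_{i=1}^N \int_{\mathbb{R}^2} (x_i - y_i)^2 \, d\pi^\star(x_i, y_i) = N\, \mathcal{W}_{2,1}(\mu,\nu)^2, \]
which yields $\mathcal{W}_{2,N}(\mu^{\otimes N},\nu^{\otimes N}) \leq \sqrt{N}\, \mathcal{W}_{2,1}(\mu,\nu)$.

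For the lower bound, I would take an arbitrary coupling $\Pi$ of $\mu^{\otimes N}$ and $\nu^{\otimes N}$ and push it forward onto the $i$-th coordinate pair to obtain $\Pi_i$, which is necessarily a coupling of $\mu$ and $\nu$. By definition of $\mathcal{W}_{2,1}$,
\[ \int_{\mathbb{R}^{2N}} (x_i - y_i)^2 \, d\Pi(\mathbf{x}, \mathbf{y}) = \int_{\mathbb{R}^2} (x_i-y_i)^2 \, d\Pi_i(x_i, y_i) \geq \mathcal{W}_{2,1}(\mu,\nu)^2. \]
Summing over $i = 1, \dots, N$ and taking square roots then the infimum over $\Pi$ gives the reverse inequality.

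Since both inequalities are straightforward, there isn't really a hard step; the only thing worth being careful about is that the tensor product of couplings is indeed admissible (trivial from Fubini) and that the coordinate projection of an admissible coupling on $\mathbb{R}^N \times \mathbb{R}^N$ is an admissible coupling on $\mathbb{R} \times \mathbb{R}$ (also trivial). The essence is that the quadratic cost on $\mathbb{R}^N$ is separable across coordinates, so tensorised transport is optimal.
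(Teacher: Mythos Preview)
Your proof is correct and rests on exactly the same two observations as the paper's: the tensor product $(\pi^\star)^{\otimes N}$ of an optimal one-dimensional coupling is an admissible $N$-dimensional coupling (giving the upper bound), and the coordinate projections of any $N$-dimensional coupling are admissible one-dimensional couplings (giving the lower bound). The paper packages these same facts as a proof by contradiction that $\pi_1^{\otimes N}$ is the optimal $N$-dimensional coupling, whereas you argue the two inequalities directly; your phrasing is arguably cleaner, since the paper's strict inequality step from $\pi_N \neq \pi_1^{\otimes N}$ is not quite justified as written (optimal couplings need not be unique), but the content is the same.
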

\begin{proof}
We know that there exists an optimal coupling, $\pi_1$ so that
\[ \mathcal{W}_{2,1}(\mu, \nu) = \left( \int_{\mathbb{R}^2} (x-y)^2 \pi_1(\mathrm{d}x, \mathrm{d}y)  \right)^{1/2} \] and an optimal coupling, $\pi_N$, such that
\[ \mathcal{W}_{2,N}( \mu^{\otimes N}, \nu^{\otimes N}) \left( \int_{\mathbb{R}^{2N}} \|\mathbf{x}-\mathbf{y}\|^2 \pi_N(\mathrm{d}\mathbf{x}, \mathrm{d}\mathbf{y}) \right)^{1/2}. \] Suppose that $\pi_N \neq \pi_1^{\otimes N}$ then we have that
\begin{align*} &\int \left( (x_1-y_1)^2 + \dots + (x_N-y_N)^2 \right)\pi_N(\mathrm{d}\mathbf{x}, \mathrm{d}\mathbf{y}) \\&< \int \left( (x_1-y_1)^2 + \dots + (x_N-y_N)^2 \right)\pi_1(\mathrm{d}x_1,\mathrm{d}y_1)\dots \pi_1(\mathrm{d}x_N, \mathrm{d}y_N) \\=& N \int (x-y)^2 \pi_1(\mathrm{d}x, \mathrm{d}y). \end{align*} Therefore, there exists some $k$ such that
\[ \int_{\mathbb{R}^{2N}}(x_k-y_k)^2 \pi_N(\mathrm{d}\mathbf{x}, \mathrm{d}\mathbf{y}) < \int_{\mathbb{R}^2} (x-y)^2 \pi_1(\mathrm{d}x, \mathrm{d}y). \] Since the integrand on the left hand side only depends on $x_k, y_k$ $\pi_N$ induces a coupling of $\mu$ and $\nu$ by projection onto the $k^{th}$ variables. The cost under this measure is strictly less that the optimal cost which is a contradiction. Hence, the optimal coupling is achieved by $\pi_1^{\otimes N}$. This gives that,
\begin{align*} \mathcal{W}_{2,N}&(\mu^{\otimes N}, \nu^{\otimes N}) \\&= \left(  \int \left( (x_1-y_1)^2 + \dots + (x_N-y_N)^2 \right)\pi_1(\mathrm{d}x_1,\mathrm{d}y_1)\dots \pi_1(\mathrm{d}x_N, \mathrm{d}y_N) \right)^{1/2} \\
&= \left( N \int (x-y)^2 \pi_1(\mathrm{d}x, \mathrm{d}y) \right)^{1/2} \\ &= \sqrt{N} \mathcal{W}_{2,1}(\mu, \nu).  \end{align*}
\end{proof}
\begin{lemma}
If $\mu_N$ and $\nu_N$ are symmetric probability distributions on $\mathbb{R}^N$ with finite second moment then
\[ \mathcal{W}_{2,1}(\Pi_1(\mu_N), \Pi_1(\nu_N)) \leq \frac{1}{\sqrt{N}} \mathcal{W}_{2,N}(\mu_N, \nu_N). \]
\end{lemma}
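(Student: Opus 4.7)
The plan is to take an optimal coupling $\pi_N$ realising $\mathcal{W}_{2,N}(\mu_N,\nu_N)$, symmetrise it to exploit the assumed permutation invariance of $\mu_N$ and $\nu_N$, and then read off a coupling of the first marginals whose cost is $1/N$ times the symmetric cost.

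First I would pick an optimal $\pi_N$ (which exists because the cost is continuous and the marginals have finite second moment). Since $\mu_N$ and $\nu_N$ are invariant under permutations of coordinates, averaging $\pi_N$ over the diagonal action of the symmetric group,
\[ \tilde{\pi}_N = \frac{1}{N!} \sum_{\sigma \in S_N} (\sigma,\sigma)_*\pi_N, \]
produces a coupling of $\mu_N$ and $\nu_N$ that is itself permutation invariant. Because $\|\sigma\mathbf{x}-\sigma\mathbf{y}\|^2 = \|\mathbf{x}-\mathbf{y}\|^2$, the cost of $\tilde{\pi}_N$ equals the cost of $\pi_N$, so $\tilde{\pi}_N$ is also optimal.

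Next, for each $k$ let $\pi^{(k)}$ be the marginal of $\tilde{\pi}_N$ on the coordinate pair $(x_k,y_k)$. By construction $\pi^{(k)}$ couples $\Pi_1(\mu_N)$ with $\Pi_1(\nu_N)$, and by permutation invariance of $\tilde{\pi}_N$ the measure $\pi^{(k)}$ does not depend on $k$; call this common coupling $\pi_1$. Then
\[ \mathcal{W}_{2,N}(\mu_N,\nu_N)^2 = \int \sum_{k=1}^N (x_k-y_k)^2 \, \tilde{\pi}_N(\mathrm{d}\mathbf{x},\mathrm{d}\mathbf{y}) = N \int (x-y)^2 \pi_1(\mathrm{d}x,\mathrm{d}y), \]
and since $\pi_1$ is an admissible (not necessarily optimal) coupling of the first marginals,
\[ \mathcal{W}_{2,1}(\Pi_1(\mu_N),\Pi_1(\nu_N))^2 \leq \int (x-y)^2 \pi_1(\mathrm{d}x,\mathrm{d}y) = \frac{1}{N}\mathcal{W}_{2,N}(\mu_N,\nu_N)^2. \]
Taking square roots gives the desired bound.

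The only nontrivial point is the symmetrisation step: without it the projected couplings $\pi^{(k)}$ need not agree, and one only obtains the weaker fact that the \emph{average} $\frac{1}{N}\sum_k \pi^{(k)}$ couples the first marginals, which gives a bound with the right $1/N$ factor by convexity of the squared-distance cost anyway. So even the main obstacle can be bypassed by convexity, but the symmetrisation gives the cleaner identity used above.
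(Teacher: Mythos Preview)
Your argument is correct and follows essentially the same route as the paper: project an optimal coupling of $\mu_N,\nu_N$ onto each coordinate pair to obtain couplings of $\Pi_1(\mu_N)$ and $\Pi_1(\nu_N)$, and then compare the sum of coordinate costs with $N\,\mathcal{W}_{2,1}^2$. The paper simply does the direct version you describe in your last paragraph---bounding each $\int (x_k-y_k)^2\,\mathrm{d}\pi_N$ below by $\mathcal{W}_{2,1}(\Pi_1(\mu_N),\Pi_1(\nu_N))^2$---without first symmetrising the coupling; your symmetrisation step is a harmless refinement yielding the equality of all coordinate costs, but it is not needed for the inequality.
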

\begin{proof}
Suppose that $\pi_N$ is a coupling of $\mu_N$ and $\nu_N$ then the marginals of $\pi_N$ induce couplings of the marginals of $\mu_N$ and $\nu_N$.
\begin{align*}
&\left( \int  \left( (x_1-y_1)^2 + \dots + (x_N-y_N)^2 \right)\pi_N(\mathrm{d}\mathbf{x}, \mathrm{d}\mathbf{y})  \right)^{1/2} \\
=& \left( \int (x_1-y_1)^2 \pi_N(\mathrm{d}\mathbf{x}, \mathrm{d}\mathbf{y}) + \dots + \int (x_N-y_N)^2 \pi_N(\mathrm{d}\mathbf{x}, \mathrm{d}\mathbf{y}) \right)^{1/2} \\ \geq & \left( N \mathcal{W}_{2,1}(\Pi_1(\mu_N), \Pi_1(\nu_N))^2 \right)^{1/2} = \sqrt{N} \mathcal{W}_{2,1}(\Pi_1(\mu_N), \Pi(\nu_N)).
\end{align*}
\end{proof} Like with the earlier sections we can combine this behaviour with our contraction estimates to show uniform behaviour of the first marginal. For simplicity we only looked at tensorised initial data.

\begin{proof}[Proof of Theorem \ref{wassersteinmarginals}]
\begin{align*}
\mathcal{W}_{2,1}( \Pi_1(\mu_N(t)), \Pi_1(\nu_N(t))) &\leq \frac{1}{\sqrt{N}} \mathcal{W}_{2,N}(\mu_N(t), \nu_N(t))\\ &\leq \frac{1}{\sqrt{N}}e^{-\mu t/2} \mathcal{W}_{2,N}( \mu_0^{\otimes N}, \nu_0^{\otimes N}) \\ &= e^{- \mu t/2} \mathcal{W}_{2,1}(\mu_0, \nu_0).
\end{align*}
\end{proof}
\begin{remark}
These uniform estimates in $N$ combined with propagation of chaos means that the limit Boltzmann-Kac equation will also show exponential convergence to equilibrium in Wasserstein-2. This is very similar to the result shown in \cite{CLM15} in the Toscani distance.
\end{remark}
\begin{acknowledgements}
I would like to thank Cl\'{e}ment Mouhot for pointing me towards this problem and providing many useful discussions I also had many useful discussions with Amit Einav, Thomas Holding, Helge Dietert and Davide Piazzoli. Further, I would like to thank Amit Einav for reading the paper and many useful suggestions on the presentation and content.
\end{acknowledgements}

\bibliographystyle{spmpsci}      

\bibliography{kacbibliography.tex.bib}{}

\end{document}